\newtheorem{thm}{Theorem}
\newtheorem{prop}[thm]{Proposition}
\newtheorem{lem}[thm]{Lemma}
\theoremstyle{remark}
\newtheorem{rem}{Remark}
  \newcommand{\del}{\partial}
  \newcommand{\oo}{\infty}
\renewcommand{\d}{\mathrm{d}}
\renewcommand{\dh}{\mathrm{d}_{\mathsf{h}}}
  \newcommand{\dv}{\mathrm{d}_{\mathsf{v}}}
  \newcommand{\Dv}{\delta_{\mathsf{v}}}
  \newcommand{\Dh}{\delta_{\mathsf{h}}}
  \newcommand{\sso}{\subset}
  \newcommand{\sse}{\subseteq}
  \newcommand{\dR}{\mathrm{dR}}
  \newcommand{\id}{\mathrm{id}}
  \newcommand{\Secs}{\Gamma}
  \newcommand{\hSecs}{\hat{\Gamma}}
  \newcommand{\tSecs}{\tilde{\Gamma}}
  \newcommand{\tsxi}[1]{\tilde{\xi}^{{#1}*}}
  \newcommand{\E}{\mathcal{E}}
  \newcommand{\F}{\mathcal{F}}
  \newcommand{\G}{\mathcal{G}}
  \newcommand{\g}{\mathfrak{g}}
\renewcommand{\H}{\mathcal{H}}
\renewcommand{\S}{\mathcal{S}}
  \newcommand{\Z}{\mathcal{Z}}
  \newcommand{\eps}{\varepsilon}
  \newcommand{\hOmega}{\hat{\Omega}}
  \newcommand{\tOmega}{\tilde{\Omega}}
  \newcommand{\Char}{\mathrm{char}}
  \newcommand{\cosym}{\mathrm{cosym}}
  \newcommand{\Def}{\mathrm{def}}
  \newcommand{\cur}{\mathrm{cur}}
  \newcommand{\lin}{\mathrm{lin}}
  \newcommand{\Null}{\mathrm{null}}
  \newcommand{\src}{\mathrm{src}}
\title{Topology, rigid cosymmetries and linearization instability in
	higher gauge theories}
\author{Igor Khavkine\\
Institute for Theoretical Physics, Utrecht, Leuvenlaan 4,\\
NL-3584 CE Utrecht, The Netherlands\\ \texttt{i.khavkine@uu.nl}}
\begin{document}
\maketitle

\begin{abstract}
We consider a class of non-linear PDE systems, whose equations possess
Noether identities (the equations are redundant), including
non-va\-ri\-a\-tion\-al systems (not coming from Lagrangian field
theories), where Noether identities and infinitesimal gauge
transformations need not be in bijection. We also include theories with
higher stage Noether identities, known as higher gauge theories (if they
are variational). Some of these systems are known to exhibit
linearization instabilities: there exist exact background solutions
about which a linearized solution is extendable to a family of exact
solutions only if some non-linear obstruction functionals vanish.  We
give a general, geometric classification of a class of these
linearization obstructions, which includes as special cases all known
ones for relativistic field theories (vacuum Einstein, Yang-Mills,
classical $N=1$ supergravity, etc.). Our classification shows that
obstructions arise due to the simultaneous presence of rigid
cosymmetries (generalized Killing condition) and non-trivial de~Rham
cohomology classes (spacetime topology). The classification relies on a
careful analysis of the cohomologies of the on-shell Noether complex
(consistent deformations), adjoint Noether complex (rigid cosymmetries)
and variational bicomplex (conserved currents). An intermediate result
also gives a criterion for identifying non-linearities that do not lead
to linearization instabilities.
\end{abstract}

\section{Introduction}\label{sec:intro}
It is well known that, on spatially compact manifolds, the solution
space of Lagrangian gauge theories like General Relativity (GR) and
Yang-Mills (YM) theory is an infinite dimensional manifold with
singularities~\cite{struc1,struc2}. The same phenomenon can occur in the
solution space of other non-linear systems of partial differential
equations (PDE systems), be they Lagrangian field theories or not. A
background solution is said to be \emph{linearization stable} if a
solution space neighborhood of it can be modeled on the vector space of
linearized solutions about that background. That is, for every
linearized solution, there exists a 1-parameter family of exact
solutions tangent to it. Otherwise, the background is said to be
\emph{linearization unstable}, in which case there exist linearized
solutions not tangent to any 1-parameter family of exact solutions.

In the case of GR, the solutions space singularities at linearization
unstable backgrounds are of conical type. That is, the corresponding
solution space neighborhood can be modeled on the zero set of a
quadratic constraint on the space of linearized solutions. The presence
of such constraints is intimately linked with the spatial compactness of
the underlying manifold. For spatially non-compact manifolds with
asymptotically flat boundary conditions, conical singularities are
absent~\cite{cb-deser,cbfm}.

Based on a remarkable observation of
Moncrief~\cite{moncrief-killing1,moncrief-killing2}, the necessary and
sufficient conditions for linearization stability (resp.\
instability) have been identified with the absence (resp.\ presence) of
Killing vectors on the background~\cite{struc1,struc2}. Similar results
have been obtained for other gauge theories, including
Yang-Mills~\cite{moncrief-ym}, Einstein-Yang-Mills~\cite{arms-eym} and
supergravity~\cite{bao-sugra}. In each case, the notion of a Killing
vector needed to be generalized to express the sufficient conditions for
linearization instability.

In this note, we generalize the Killing condition to a general class of
Lagrangian gauge theories, and even to non-Lagrangian theories, that
satisfy non-trivial \emph{Noether identities},%
	\footnote{\label{noether-term}%
	This terminology deserves a comment.  Without question, the
	terminology \emph{Noether identity}~\cite{sarda,kls} is correct and
	standard for Lagrangian systems, where they have also been called
	\emph{gauge identities}~\cite{barnich-brandt}. On the other hand, for
	non-Lagrangian systems, differential identities that annihilate a
	given linear differential operator have been traditionally called
	\emph{compatibility operators}~\cite{goldschmidt-lin,verbov1,verbov2}
	or less commonly \emph{Janet operators}~\cite{pommaret}. So, what we
	will later call the \emph{Noether complex} can also be referred to as
	the \emph{compatibility complex} or \emph{Janet sequence}. We choose to
	use the name Noether identity even for non-Lagrangian systems simply
	because of the strong historical link between linearization stability
	analysis and Lagrangian field theories.} %
which mean that the field equations are redundant). In Lagrangian gauge
theories, the Noether identities are precisely dual to generators of
gauge transformations, by Noether's theorem. In non-Lagrangian theories,
the two need not be connected~\cite{kls}. We then show how these
theories acquire potential linearization instabilities, as in the above
examples. In fact, this shows that it is the Noether identities that are
responsible, rather than the gauge symmetries. Thus, since the classical
Killing condition is associated with symmetries, we call the generalized
condition responsible for linearization instabilities \emph{co-Killing}.
Since, as already mentioned, Noether's second theorem links gauge
symmetries and Noether identities, this distinction may have been
ambiguous in the past. Briefly, fields satisfying the co-Killing
condition are those in the kernel of a adjoint linearized Noether
operator and are called \emph{cosymmetries}~\cite{serg}. In specific
cases, they have also been called \emph{reducibility
parameters}~\cite{barnich-brandt}.

Starting with the original work of Fischer~\&\ Marsden on vacuum
GR~\cite{fima-linstab}, the conditions for linearization instability
have often been detected through an analysis of the constraints imposed
by the field equations on the initial data. In 4-dimensional GR, this
requires the well-known 3+1 ADM decomposition. However, given that the
resulting conditions (presence or absence of Killing vectors) are
spacetime covariant it is desirable to have a covariant derivation of
the standard and generalized co-Killing conditions as well. Such
derivations are indeed available for specific examples as for instance
for vacuum GR in~\cite{struc1} (see also~\cite{unruh-losic}, where
similar ideas appear outside the specialized literature). In this paper,
we give a covariant derivation of the generalized co-Killing condition,
applicable to the same large class of theories mentioned above.

After having completed this work we learned of a little known paper of
Arms~\& Anderson~\cite{aa-taub}, which also gives a fully covariant
derivation of generalized Killing conditions, in fact using ideas
analogous to those presented below. However, their results are rather
less general than ours and are presented in terms of explicit
calculations on the example of Einstein-Yang-Mills theory (though with
indications of how they are to be generalized). We give a brief
comparison of their results with ours at the end of this section.

Since we analyze a large class of field theories instead of working on
specific examples, we require an adequate abstract framework to state
the necessary hypotheses and carry out the analysis. The abstract
framework is provided by the jet-bundle formalism and various associated
differential complexes and their cohomologies. One benefit of the
abstraction is to attract attention to the fact that the necessary
hypotheses are surprisingly liberal. In particular, the class of
admissible PDE systems is larger than the determined elliptic and
hyperbolic systems or gauge theories closely resembling those. Also, it
becomes clear how the topology of the spacetime manifold gives rise to
linearization instabilities and that any non-trivial de~Rham cohomology
class may be responsible (not only one dual to a compact Cauchy surface).

The jet-bundle formalism is introduced in Section~\ref{sec:loc-geom};
its contents are standard and serve mostly to fix notation. The main
hypotheses imposed on the PDE systems under consideration are described
in Section~\ref{sec:pde}.  Section~\ref{sec:def-cosym-src} introduces
three important concepts: consistent deformations
(Section~\ref{sec:def}), cosymmetries and the co-Killing condition
(Section~\ref{sec:cosym}), and null sources (Section~\ref{sec:src}).
Section~\ref{sec:lin-stab} contains the main results of the paper.
Linearization instabilities and obstructions are defined
in~\ref{sec:obstr}. Special conserved currents, deformation currents,
are defined by pairing consistent deformations and null sources in
Section~\ref{sec:def-cur}. Our main result is obtained in
Section~\ref{sec:def-cur}: the deformation current defined by the
non-linearity gives rise to a linearization obstruction valued in the
de~Rham cohomology of the spacetime manifold. The relation of this
result to previous work is briefly discussed in
Section~\ref{sec:charge}, while Section~\ref{sec:asymp} discusses
obstructions generated by non-trivial asymptotic boundary conditions
rather than non-trivial topology.  Finally, the general result is
specialized to several examples, some of which are novel, in
Section~\ref{sec:ex} and Section~\ref{sec:discuss} concludes with a
discussion.

Since analogous ideas had previously appeared in~\cite{aa-taub}, let us
make a quick comparison. Arms~\& Anderson used the standard first and
second Noether theorems for Einstein-Yang-Mills equations to link
`symmetries' and `gauge symmetries' (rigid stage-$0$ and stage-$1$
symmetries, respectively, in our terminology) with conservation laws.
They perturbed these conservation laws to derive conserved currents for
the linearized equations. They called these currents `Taub forms' and
their charges (Cauchy surface integrals) `Taub numbers'. The
non-vanishing of some of the Taub numbers was then shown to be an
obstruction to linearization stability.  Thus, Taub forms are analogous
to our deformation currents paired with the equation non-linearity and
Taub numbers to our de~Rham cohomology valued obstructions. In contrast,
their work considered neither non-Lagrangian equations nor lower degree
conservation laws. Moreover, the generation of conservation laws from
the pairing of deformation currents and non-linear deformations was only
implicit in their calculations. Finally, they showed that Taub numbers
are gauge invariant. On the other hand, we do not consider this question
below, since we make no hypothesis about the presence or absence of
gauge symmetries in the class of equations that we consider.

\section{Local geometry of PDEs}\label{sec:loc-geom}
\subsection{Jets, local forms and the variational bicomplex}\label{sec:jets}
The natural setting for the local analysis of differential equations is
that of \emph{jet bundles}. The needed concepts and notation are
briefly introduced below. More details can be found in the standard
literature; see for
instance~\cite{olver,anderson-small,anderson-big,vinogradov,goldschmidt,seiler,bbh-rep}.

Given a vector bundle $F\to M$ over a connected $n$-dimensional smooth
manifold $M$, the \emph{$k$-jet bundle} $J^kF\to M$ is a vector bundle
whose defining characteristic is that for any (possibly non-linear)
differential operator $f\colon \Secs(F)\to \Secs(F')$ of order $k$,
there exists a canonical factorization $f[u] = f\circ j^k u$ for any
section $u\colon M\to F$, where the \emph{$k$-jet prolongation}
$j^k\colon \Secs(F) \to \Secs(J^kF)$ is composed with a smooth bundle
map $f\colon J^kF\to F'$, which by a slight abuse of notation we denote
using the same symbol as the original differential operator. Composing
the differential operator $f$ with an $l$-jet prolongation canonically
defines a new differential operator $p_l f\colon J^{l+k}F \to J^lF'$
called its \emph{$l$-prolongation}, $j^l f[u] = p_l f\circ j^k u$. Given
a trivializable restriction $F_U\to U$ of $F$ to a chart $U\sso M$ with
local coordinates $(x^i)$ and fiber-adapted local coordinates
$(x^i,u^a)$, there is a corresponding adapted chart $J^kF_U\sso J^kF$
with adapted local coordinates $(x^i,u^a_I)$, where $I=i_1\cdots i_l$
runs through multi-indices of orders $|I|=l=0,\ldots,k$. In these
coordinates, the $k$-jet prolongation is given by $j^k u(x) =
(x^i,\partial_I u^a(x))$, while the $l$-prolongation is given by $p_l
f[u](x) = (x^i,\del_I f^b[u](x))$, where $f[u](x) = (x^i,f^b[u](x))$ in
fiber-adapted local coordinates $(x^i,v^b)$ on $F'$. For any $l>k$,
discarding the information about all derivatives of order $>k$ defines a
natural projection $J^lF\to J^kF$.  The projective limit $J^\oo F :=
\varprojlim_{k\to\oo} J^k F$ defines the \emph{$\oo$-jet bundle}. The
$\oo$-jet prolongation $j^\oo$ and $\oo$-prolongation $p_\oo$ are
defined in the obvious way. By composing with the natural projection
$J^\oo F\to J^kF$, the differential operator $f$ also canonically defines
the smooth bundle map $f\colon J^\oo F\to J^kF \stackrel{f}{\to} F'$,
which is again denoted by the same symbol $f$. Conversely, due to the
projective limit construction, any smooth bundle map $f\colon J^\oo F
\to F'$ can only depend on finitely many coordinates of its domain,
which means that there exists a $k\ge 0$ such that this bundle map
canonically factors as $f\colon J^\oo F\to J^k F \stackrel{f}{\to} F'$,
with the smallest such $k$ being the \emph{order} of $f$.

Denote by $TM$ and $T^*M$ the tangent and cotangent bundles of $M$.
Also, let $\Lambda^k M = \bigwedge^k T^*M$ be the bundles of $k$-forms.
Denote by $\Omega^k = \Omega^k(M) = \Secs(\Lambda^k M)$ the spaces of
differential forms, with $\Omega^* = \bigoplus_k \Omega^k$. We call
$\Omega^*(J^\oo F)$ the space of \emph{local variational forms} on $F$.
The de~Rham differential on $\Omega^*(J^\oo F)$ canonically splits into
the sum $\d = \dh + \dv$, where the respective \emph{horizontal} and
\emph{vertical} differentials are individually nilpotent and
anti-commutative, $\dh^2 = \dv^2 = 0$ and $\dh\dv + \dv\dh = 0$. The
defining action of the horizontal differential on adapted local
coordinates $(x^i,u^a_I)$ is $\dh x^i = \d{x^i}$ and $\dh u^a_I =
u^a_{Ii} \d{x^i}$; then simply $\dv = \d - \dh$. Denote by
$\Omega^{h,0}(F), \Omega^{0,v}(F) \sso \Omega^*(J^\oo F)$ the subspaces
of \emph{local horizontal} and \emph{local vertical forms}, which
generate the entire space of local variational forms by wedge products.
Hence, we have a natural bigrading $\Omega^*(J^\oo F) = \bigoplus_{h,v}
\Omega^{h,v}(F)$, where $0\le h \le n$ and $0\le v < \oo$. There is a
natural inclusion $\Omega^k(M)\sso \Omega^{k,0}(F)$, via the pullback
along the natural projection $\pi_\oo \colon J^\oo F\to M$, where the
image of the inclusion is said to consist of \emph{field-independent}
forms.

The operators $\dh$ and $\dv$ together with the horizontal-vertical
bigrading turns the space of local variational forms, augmented as shown
below, into the \emph{variational bicomplex of
$F$}~\cite{anderson-small,anderson-big}:
\begin{equation}\vcenter{\xymatrix{
	& 0 \ar[d] & 0 \ar[d] & 0 \ar[d] & 0 \ar[d] \\
	0 \ar[r] &
		\underline{\Omega}^{0,0}   \ar[d]^-\dv \ar[r]^-\dh &
		\underline{\Omega}^{1,0}   \ar[d]^-\dv \ar[r]^-\dh & \cdots
		\underline{\Omega}^{n-1,0} \ar[d]^-\dv \ar[r]^-\dh &
		\underline{\Omega}^{n,0}   \ar[d]^-\dv \ar@(r,u)[rd]^-{\delta_{EL}} \\
	0 \ar[r] &
		\Omega^{0,1}   \ar[d]^-\dv \ar[r]^-\dh &
		\Omega^{1,1}   \ar[d]^-\dv \ar[r]^-\dh & \cdots
		\Omega^{n-1,1} \ar[d]^-\dv \ar[r]^-\dh &
		\Omega^{n,1}   \ar[d]^-\dv \ar[r]^-I & 
		\F^1           \ar[d]^-\Dv \ar[r] & 0 \\
	0 \ar[r] &
		\Omega^{0,2}   \ar[d]^-\dv \ar[r]^-\dh &
		\Omega^{1,2}   \ar[d]^-\dv \ar[r]^-\dh & \cdots
		\Omega^{n-1,2} \ar[d]^-\dv \ar[r]^-\dh &
		\Omega^{n,2}   \ar[d]^-\dv \ar[r]^-I & 
		\F^2           \ar[d]^-\Dv \ar[r] & 0 \\
	& \vdots & \vdots & \vdots & \vdots & \vdots
}}\end{equation}
We have abbreviated $\Omega^k = \Omega^k(M)$ and $\Omega^{h,v} =
\Omega^{h,v}(F)$. All arrows commute, which defines the
\emph{Euler-Lagrange derivative} $\delta_{EL} = I\circ \dv$, where $I$
is the so-called \emph{interior Euler operator}. The space $\F^k
\cong \bigoplus_a \dv u^a \wedge \Omega^{n,k-1}$ is called the space of
\emph{local functional $k$-forms} is defined by the image of $I$ and the
\emph{variational differential} $\delta_v$ is defined as the unique map
commuting with the rest of the diagram.

We can define the cohomologies, $H^{h,v}(\dh)$ and $H^{h,v}(\dv)$, of
the horizontal and vertical differentials in the obvious way; they
correspond to the cohomologies of the corresponding parts of the above
rows and columns. All the columns and rows are exact, with the exception
of the underlined nodes, with the proviso that the bent $\delta_{EL}$
arrow makes the following sequence exact also except at the underlined
nodes:
\begin{equation}\label{eq:bi-exact}\xymatrix{
	0 \ar[r] &
	\underline{\Omega}^{0,0} \ar[r]^-\dh & \cdots
	\underline{\Omega}^{n-1,0} \ar[r]^-\dh &
	\underline{\Omega}^{n,0} \ar[r]^-{\delta_{EL}} &
	\F^1 \ar[r]^-\Dv & \F^2 \ar[r]^-\Dv \ar[r] & \cdots .
}\end{equation}
At the underlined nodes, the horizontal cohomology is completely
characterized by the de~Rham cohomology of $M$, $H^{k,0}(\dh) \cong
H_\dR^k(M)$. At the same nodes, the vertical cohomology consists of all
field-independent forms, $H^{k,0}(\dv) \cong \Omega^k(M)\sso
\Omega^{k,0}(F)$.

Note that any local horizontal form $\omega\in \Omega^{*,0}(F)$
naturally defines a smooth bundle morphism $\omega\colon J^\oo F\to
\Lambda^*M$ as well as a differential operator $\omega\colon \Secs(F)
\to \Secs(\Lambda^*M)$, $\omega[\psi] = \omega\circ j^\oo\psi$ for
$\psi\in\Secs(F)$, where we have slightly abused notation by denoting
each of these naturally associated objects by the same symbol $\omega$.
It is convenient to generalize this construction by replacing the bundle
$\Lambda^*M\to M$ with an arbitrary vector bundle $H\to M$. Any
differential operator $f\colon J^\oo F \to H$ shall also be termed a
\emph{local section of $H$} (also \emph{$F$-local} if such precision is
necessary). Denote the space of all such local sections by
$\Secs_F(H)$. Evidently the space of local horizontal forms is the
same as the space of local sections of $\Lambda^*M$, $\Omega^{*,0}(F)
\cong \Secs_F(\Lambda^*M)$. The pullback along the bundle projection
$F\to M$ induces a natural inclusion $\Secs(H)\to \Secs_F(H)$ of
\emph{field-independent} local sections in the space of all local
sections. Similarly, given two vector bundles $G\to M$ and $H\to M$, we
call a differential operator $f\colon \Secs_F(G) \to \Secs_F(H)$
\emph{local} (or \emph{$F$-local}) when $f[\psi,\xi[\psi]] = f\circ(
j^\oo\psi\times p_\oo\xi)$, where $\psi\in\Secs(F)$, $\xi\in \Secs_F(G)$
and on the right-hand side $f\colon J^\oo(F\times G)\to H$. If
$f[\psi,\zeta]$ is linear in its second argument, we write it as
$f[\psi;\zeta]$. Of course, any differential operator $f\colon\Secs(G)
\to \Secs(H)$ naturally defines a \emph{field-independent} local
differential operator $f\colon \Secs_F(G) \to \Secs_F(H)$.

Local sections and local differential operators are discussed
in~\cite{verbov1,verbov2} as $\mathcal{C}$-modules and
$\mathcal{C}$-differential operators.

\subsection{PDE submanifold}\label{sec:pde}
In the jet bundle setting, a PDE system has a dual
description~\cite{vinogradov,verbov1,verbov2,seiler}, as a
non-empty, smooth sub-bundle $\E^\oo\sso J^\oo(F)$ over $M$ and as the
zero-set of the $\oo$-prolongation $p_\oo e\colon J^\oo F\to J^\oo G$ of
a smooth bundle morphism $e\colon J^\oo F\to G$, where the vector
bundles $F\to M$ and $G\to M$ are respectively referred to as the
\emph{field bundle} and the \emph{equation bundle}. The requirement of
having both descriptions is non-vacuous. It is possible that the zero
set of $p_\oo e$ is not a submanifold (say if the Jacobian of $e$ is not
of constant rank) and it is possible that $\E^\oo$ is not the zero set
of any smooth bundle map (see~\cite[Section 7]{goldschmidt} for the necessary
and sufficient topological condition on the normal bundle of the
embedding $\E^\oo\sso J^\oo(F)$). A section $\phi\colon M\to F$ is a
\emph{solution} of the PDE system if its $\oo$-jet prolongation is
contained in the PDE submanifold, $j^\oo \phi(M) \sse \E^\oo$, or
equivalently if $e[\phi] = 0$.

Clearly the smooth PDE sub-bundle $J^\oo F\supset \E^\oo\to M$ provides
an intrinsic description of the PDE system, while the choice of the
\emph{equation form}, the bundle $G\to M$ and the map $e$, are not
unique. Suppose that the differential operator $e$ is of order $k$ and
denote the natural projection $\pi^k\colon J^\oo F\to J^kF$. Without
loss of generality, we can presume that the zero set $\E$ of $e\colon
J^kF \to G$ is a smooth sub-bundle of $J^kF$ and that it agrees with the
projection of $\E^\oo$, $\E = \pi^k \E^\oo$. Such an equation form is
said to be \emph{involutive}. In the sequel we freely use an equation
form $e[\psi] = 0$ or the submanifold $\E\sso J^kF$ to define a PDE
system.

Define the space of \emph{on-shell local variational forms}
$\Omega^*(\E^\oo)$ to be the pullback image of $\Omega^*(J^\oo F)$ along
the inclusion $\E^\oo\sso J^\oo F$ and $\d$ the de~Rham differential on
it. The horizontal-vertical bigrading and the split of the de~Rham
differential, $\d = \dh+\dv$, commute with the pullback, which
immediately defines the decomposition $\Omega^*(\E^\oo) \cong \sum_{h,v}
\Omega^{h,v}(\E)$ and the \emph{on-shell} variational bicomplex with the
projected operators $\dh$, $\dv$, $\delta_{EL}$ and $\delta_v$. The
cohomology $H_\E^{*,0}(\dh)$ in the space of \emph{on-shell local
horizontal forms} is also called the \emph{characteristic cohomology} of
the PDE system and is denoted
$H^*_\Char(\E)$~\cite{olver,vinogradov,bbh-rep}. We still have a natural
inclusion of the de~Rham cohomology of $M$ in the characteristic
cohomology, $H^*_\dR(M)\sse H^*_\Char(\E)$, but it is no longer
necessarily an isomorphism. The image of this inclusion is called the
subspace of \emph{field-independent} local horizontal forms. The
quotient $H^p_\cur(\E) := H^{n-p}_\Char(\E) / H^{n-p}_\dR(M)$ is known
as the space of \emph{conserved (local) $p$-currents}.

We also define the space of \emph{null local variational forms}
$\hOmega^*(\E^\oo)$ as the kernel of the projection $\Omega^*(F) \to
\Omega^*(\E^\oo)$, which consists of all local variational forms that
vanish on the PDE submanifold $\E^\oo$. Therefore, we have the following
exact sequence:
\begin{equation}\xymatrix{
	\hOmega^{*,*}(\E) \ar[r] &
	\Omega^{*,*}(F) \ar[r] &
	\Omega^{*,*}(\E) .
}\end{equation}
We denote the cohomology of the complex $(\hOmega^{*,0},\dh)$ of null
local horizontal forms by $H^*_\Null(\E)$. It should be noted that,
provided the regularity conditions to be specified below are satisfied,
null local forms are precisely those that are exact in terms of the
Koszul-Tate part of the BV-BRST complex~\cite{bbh-rep}.

Given a vector bundle $H\to M$, we can analogously define the outer ends
of the exact sequence
\begin{equation}\xymatrix{
	\hSecs_\E(H) \to \Secs_F(H) \to \Secs_\E(H) ,
}\end{equation}
where the space $\hSecs_\E(H)$ of \emph{null local sections of $H$}
consists of those that vanish on the PDE submanifold $\E^\oo$ and the
space $\Secs_\E(H)$ of \emph{on-shell local sections of $H$} is the
quotient.

Now, let us consider the equation form $e[\phi] = 0$ as defining a local
differential operator $e\colon \Secs_F(F) \to \Secs_F(G)$ by the formula
$e[\psi,\phi[\psi]] = e[\psi]$, which corresponds to the smooth bundle
map $\id\times p_\oo e\colon J^\oo(F\times F)\to J^\oo G$. Consider the
image $\G^\oo = (\id \times p_\oo e)(J^\oo (F\times F)) \sse J^\oo
(F\times G)$. In this paper we are concerned with the case when $\G^\oo$
is not necessarily all of $J^\oo (F\times G)$. In particular, we are
interested in the cases when $\G^\oo$ is itself a PDE in the sense given
above. That is, there exists an involutive equation form $z^0\colon
J^\oo (F\times G)\to Z^1$ such that $\G^\oo$ is the zero set of
$\pi_F\times p_\oo z^0$ and a submanifold of $J^\oo (F\times G)$. The
naturally associated local differential operator $z^0\colon \Secs_F(G)
\to \Secs_F(Z^1)$ is variously known as a \emph{(stage-0) Noether
operator}, \emph{compatibility operator} or \emph{redundancy operator},
while we call $Z^1\to M$ the \emph{(stage-1) Noether bundle}. When there
are no topological obstructions, the same construction can be iterated.
Let $Z^0=G$, $(\Z^0)^\oo=\G^\oo$ and define $(\Z^i)^\oo = (\pi_F\times
p_\oo z^{i-1})(J^\oo (F\times Z^{i-1}))$ with involutive equation form
$z^i \colon J^\oo (F\times Z^i)\to Z^{i+1}$, with $z^i$ and $Z^i$
respectively the \emph{stage-$i$ Noether operator} and \emph{stage-$i$
Noether bundle} (which also naturally define local differential
operators). Provided there is no topological obstruction and the iteration
terminates%
	\footnote{Actually, at each stage, this procedure need only work on an
	open neighborhood of $(\Z^r)^\oo$, though then $Z^{r+1}$ may need to
	be chosen as a non-linear smooth bundle. For simplicity, we ignore
	these technicalities and work only with vector bundles over $M$.} %
at $i=r$ if $(\Z^r)^\oo=J^\oo Z^r$, the PDE system in question is said
to be \emph{stage-$r$ irreducible}. When $r>0$, the PDE system is called
a \emph{gauge theory}, while when $r=1$ it is said to be an
\emph{irreducible gauge theory}. The end point of this construction is a
\emph{formally exact} complex of local differential operators
\begin{equation}\label{eq:Ncpx}\xymatrix{
	\Secs_F(F) \ar[r]^-{e} &
	\Secs_F(Z^0) \ar[r]^-{z^0} &
	\Secs_F(Z^1) \ar[r]^-{z^1} &
	\cdots \Secs_F(Z^r) \ar[r] & 0 ,
}\end{equation}
where \emph{formal exactness}~\cite{verbov1,verbov2} means that the
following is an exact sequence of smooth bundles over $M$:
\begin{multline}\xymatrix{
	J^\oo F^2 \ar[rr]^-{\id\times p_\oo e} &&
	J^\oo (F\times Z^0) \ar[rr]^-{p_\oo (\pi_F\times z^0)} &&
	J^\oo (F\times Z^1)
} \\
\xymatrix{
	\ar[rr]^-{p_\oo (\pi_F\times z^1)} &&
	\cdots J^\oo (F\times Z^r) \ar[r] & M\times\{*\} .
}\end{multline}

To even have a hope of proving some stability results for the space of
solutions of a PDE system, the PDE system itself must satisfy some
regularity properties. Moreover, in the next section, we state a
characterization of the characteristic cohomology groups in terms of the
Noether complex, which only holds when appropriate regularity properties
are satisfied. These properties are discussed in detail in~\cite[Sections
5.1, 6.4.2--3]{bbh-rep} and are summarized in the following subsections.

\subsubsection{Local regularity}
We say that a PDE system $\E^\oo\sso J^\oo F$ is \emph{locally regular}
if it is (i) \emph{stage-$r$ irreducible} and (ii) the corresponding
differential operators $e\colon J^k\colon F\to G$, $z^i\colon J^{k_i}
(F\times Z^i) \to Z^{i+1}$, where $e$ and $z^i$ are respectively of
orders $k$ and $k_i$, can be chosen to be smooth bundle maps with
Jacobians of constant rank and such that the Noether operators
$z^i[\psi,\zeta]$ are linear in their second arguments, which we denote
as $z^i[\psi;\zeta]$.

This requirement is essentially equivalent to that of~\cite[Section
5.1]{bbh-rep}. In particular, it allows us to conclude that any smooth
bundle map $f\colon J^\oo F\to F'$ that vanishes on the PDE manifold
$\E^\oo$ (i.e.,\ $e[\psi]=0$ implies $f[\psi]=0$) must factor through the
prolongued equation form:
\begin{equation}\xymatrix{
	f\colon J^\oo F \ar[r]^-{p_\oo e} &
	J^\oo (F\times G) \ar[r]^-{f} &
	F' ,
}\end{equation}
that is, $f[\psi] = f[\psi;e[\psi]]$, where $f[\psi;\xi]$ is linear in
the last argument. Further more, we can also conclude that any smooth
bundle map $g\colon J^\oo (F\times G)\to G'$ that vanishes on the stage-$0$
Noether manifold $\G^\oo = (\Z^0)^\oo$ (i.e.,\ $g[\psi,e[\psi]]=0$ for
arbitrary $\psi$) must factor through the prolongued stage-$0$ Noether
operator:
\begin{equation}\xymatrix{
	g\colon J^\oo (F\times G) \ar[r]^-{p_\oo e} &
	J^\oo (F\times Z^1) \ar[r]^-{g} &
	G' ,
}\end{equation}
that is, $g[\psi,\xi] = g[\psi; z^0[\psi;\xi]]$, where $g[\psi;\zeta]$ is
linear in the last argument. Similar remarks can be made about the
higher stage Noether operators. In total, they are equivalent to the
acyclicity of the Koszul-Tate complex in positive anti-field degree,
which is used extensively in the BV-BRST analysis of gauge
theories~\cite{bbh-rep}.

\subsubsection{Local linearizability}
In the question of linearization stability, we are interested in
the relationship between the space of linearized solutions about a
background exact solution $\varphi\colon M\to F$ and a neighborhood of
$\varphi$ in the space of exact solutions. As such, the linearization of
the PDE system in question must be well defined.

Consider an arbitrary smooth $1$-parameter family of sections
$\phi_t\colon M\to F$, with $\phi_t = \varphi + t\psi + \Psi_t$, where
$\Psi_t = O(t^2)$. For convenience, we denote $\psi_t = t\psi + \Psi_t$.
We call a stage-$r$ irreducible, locally regular PDE system
\emph{locally linearizable about $\varphi$} if (i) we can write, for
$\phi_t$ as above and any $\zeta^i\in \Secs(Z^i)$,
\begin{align}
\label{eq:lin-e}
	e[\phi_t] &= e[\varphi] + e_\varphi[\psi_t] - f_\varphi[\psi_t] , \\
\label{eq:lin-z}
	z^i[\phi_t;\zeta^i] &= z^i_\varphi[\zeta^i] - y^i_\varphi[\psi_t;\zeta^i] ,
\end{align}
where the local differential operators $e_\varphi$ and $z^i_\varphi$
($z^i_\varphi[\zeta^i] = z^i[\varphi;\zeta^i]$) are linear, while
$f_\varphi[\psi_t]=O(t^2)$ and $y_\varphi[\psi_t;\zeta] = O(t)$, and
(ii) the linear PDE system defined by $e_\varphi[\psi] = 0$ is stage-$r$
irreducible and locally regular, with its \emph{Noether complex}%
	\footnote{Also known as the \emph{compatibility complex} or
	\emph{Janet sequence}, cf.~footnote~\ref{noether-term}.} %
given by
\begin{equation}\label{eq:lNcpx}\xymatrix{
	\Secs_F(F) \ar[r]^-{e_\varphi} &
	\Secs_F(Z^0) \ar[r]^-{z^0_\varphi} &
	\Secs_F(Z^1) \ar[r]^-{z^1_\varphi} &
	\cdots \Secs_F(Z^r) \ar[r] & 0 .
}\end{equation}
We denote the PDE submanifolds defined by the linearized equation form
$e_\varphi[\psi] = 0$ by $\E_\varphi^\oo\sse J^\oo F$ and
$\E_\varphi = \pi^{k_\varphi} \E_\varphi^\oo \sso J^{k_\varphi}F$, where
$k_\varphi$ is the order of $e_\varphi$. Incidentally, linearization has
made the local differential operators $e_\varphi$ and $z^i_\varphi$
field-independent. We can then naturally define the
\emph{field-independent linearized Noether complex}
\begin{equation}\label{eq:ilNcpx}\xymatrix{
	\Secs(F) \ar[r]^-{e_\varphi} &
	\Secs(Z^0) \ar[r]^-{z^0_\varphi} &
	\Secs(Z^1) \ar[r]^-{z^1_\varphi} &
	\cdots \Secs(Z^r) \ar[r] & 0 ,
}\end{equation}
which is a subcomplex of the field-dependent one above.

The requirement of local linearizability also imposes a restriction on
the allowed background sections $\varphi\colon M\to F$. We call such an
admissible section $\varphi$ a \emph{linearizable background}. Denote by
$\Secs_\lin(F)\sse \Secs(F)$ the subset of linearizable background
sections. Also, denote by $\S(\E) \sso \Secs(F)$ the subset of
solutions, $\varphi\in \S(\E)$ if $e[\varphi] = 0$. Finally, denote by
$\S_\lin(\E)\sse \S(\E)$ the subset of solutions that consists of all
linearizable backgrounds. All of these spaces can be topologized as
subsets of $\Secs(F)$, with its natural smooth compact open (or Whitney)
Fr\'echet topology~\cite{hirsch,km}. We say that the PDE system in question is
\emph{locally linearizable} if the $\S_\lin(\E)$ is an open subset of
$\S(\E)$.

\subsubsection{Local normality}
Finally, there is almost no hope of identifying linearized solutions
with the full set of exact solutions if the non-linear system is of a
higher order than the linear one.

Let $k$, $k_i$, $k^\varphi$ and $k_i^\varphi$ denote the respective
orders of the differential operators $e$, $z^i$, $e_\varphi$ and
$z^i_\varphi$. We say that a stage-$r$ irreducible, locally
regular, locally linearizable PDE system is \emph{locally normal} if (i)
the orders $k=k^\varphi$, $k_i = k_i^\varphi$ agree for all linearizable
backgrounds $\varphi\in \S_\varphi(\E)$ and (ii) the ranks of the linear
bundle maps $e_\varphi \colon J^k F\to G$ and $z^i_\varphi
\colon J^{k_i}(F\times Z^i)\to Z^{i+1}$ are respectively the same as
those of the Jacobians of the smooth bundle maps $e\colon J^kF\to G$ and
$z^i\colon J^{k_i}(F\times Z^i) \to Z^{i+1}$.

\section{Deformations, cosymmetries, sources}\label{sec:def-cosym-src}

\subsection{Consistent deformations}\label{sec:def}
Consider the linearized Noether complex~\ref{eq:lNcpx}. Clearly, by
linearity, the subspaces of null local sections, $\hSecs_{\E_\varphi}(F)\sso
\Secs_F(F)$ and $\hSecs_{\E_\varphi} (Z^i)\sso \Secs_F(Z^i)$ are
preserved by the action of $e_\varphi$ and $z^i_\varphi$. Therefore, we
can define the \emph{null} and \emph{on-shell} Noether complexes as the
top and bottom rows of the following commuting bicomplex:
\begin{equation}\vcenter{\xymatrix{
	\hSecs_{\E_\varphi}(F) \ar[r]^-{e_\varphi} \ar[d] &
	\hSecs_{\E_\varphi}(Z^0) \ar[r]^-{z^0_\varphi} \ar[d] &
	\hSecs_{\E_\varphi}(Z^1) \ar[r]^-{z^1_\varphi} \ar[d] &
	\cdots \Secs_F(Z^r) \ar[r] \ar[d] & 0 \\
	\Secs_F(F) \ar[r]^-{e_\varphi} \ar[d] &
	\Secs_F(Z^0) \ar[r]^-{z^0_\varphi} \ar[d] &
	\Secs_F(Z^1) \ar[r]^-{z^1_\varphi} \ar[d] &
	\cdots \Secs_F(Z^r) \ar[r] \ar[d] & 0 , \\
	\Secs_{\E_\varphi}(F) \ar[r]^-{e_\varphi} &
	\Secs_{\E_\varphi}(Z^0) \ar[r]^-{z^0_\varphi} &
	\Secs_{\E_\varphi}(Z^1) \ar[r]^-{z^1_\varphi} &
	\cdots \Secs_F(Z^r) \ar[r] & 0
}}\end{equation}
where the columns are exact and the middle row is formally exact. We
call the cohomologies of the rows of the above bicomplex, respectively,
the \emph{null}, \emph{(off-shell)} and \emph{on-shell (stage-$i$)
consistent deformations}, denoted respectively by
$\hat{H}^i_\Def(e_\varphi)$, $H^i_\Def(F,e_\varphi)$ and
$H^i_\Def(e_\varphi)$. Note that, since the Noether complex depends
explicitly on the equation form $e_\varphi$, rather than just on the PDE
submanifold $\E_\varphi$, we indicate the same dependence in the
notation for consistent deformations.

From the definition, an on-shell consistent deformation class $[f] \in
H^0_\Def(e_\varphi)$ is represented by a local section $f\in
\Secs_F(Z_0)=\Secs(G)$. The local section $f$ is said to be
\emph{trivial} if $[f] = [0]$, which by local regularity means that it
must be of the form
\begin{equation}\label{eq:cdr-triv}
	f[\psi] = e_\varphi[g[\psi]] + h[\psi;e_\varphi[\psi]] ,
\end{equation}
for some local section $g\in \Secs_F(F)$ and some local differential
operator $h\colon \Secs_F(G) \to \Secs_F(G)$, with $h[\psi;\zeta]$
linear in its second argument.

Recall the linearization formulas~\ref{eq:lin-e} and~\ref{eq:lin-z}. Let
us define the leading order $m>1$ and leading term $f^{(m)}_\varphi$ of
$f_\varphi$ as
\begin{equation}
	f_\varphi[t\psi] = t^m f^{(m)}_\varphi[\psi] + O(t^{m+1}) .
\end{equation}
Similarly, let $m_i>1$ and $y^{i(m_i)}_\varphi$ the leading order and
leading term of $y^i_\varphi$. We also use the same notation for higher
order terms in $f$ and $y^i$. Clearly, $f^{(m)}_\varphi[\psi]$ and
$y^{i(m_i)}_\varphi[\psi;\zeta]$ are homogeneous in $\psi$, of order $m$
and $m_i$ respectively. Consider the following expansion of the exact
Noether identity $z^0[\phi;e[\phi]]=0$, which holds for arbitrary
sections $\phi = \varphi + t\psi\in \Secs(F)$:
\begin{multline}
	t z^0_\varphi[e_\varphi[\psi]] - z^0_\varphi[f_\varphi[t\psi]]
		- t y^0_\varphi[t\psi;e_\varphi[\psi]]
		+ y^0_\varphi[t\psi;f_\varphi[t\psi]] = 0 \\
	\implies
	t z^0_\varphi[e_\varphi[\psi]]
	= t^m z^0_\varphi[f^{(m)}_\varphi[\psi]]
		+ t y^0_\varphi[t\psi;e_\varphi[\psi]]
		+ O(t^{m+1}) .
\end{multline}
Completing the Taylor expansion and comparing the coefficients of powers
of $t$, we find the condition $z^0_\varphi[e_\varphi[\psi]] = 0$ at
order $O(t)$, which was already a requirement of local linearizability,
and at order $O(t^m)$ that
\begin{equation}
	z^0_\varphi[f^{(m)}_\varphi[\psi]]
	= - y^{0(m-1)}_\varphi[\psi;e_\varphi[\psi]] .
\end{equation}
In other words, we find that $f^{(m)}_\varphi$ represents an equivalence
class $[f^{(m)}_\varphi]\in H^0_\Def(e_\varphi)$ of on-shell consistent
deformations.

Of course, one could consider the cohomology spaces
$H^i_\Def(M,e_\varphi)$ of the field-independent linearized Noether
complex~\eqref{eq:ilNcpx}. These \emph{field-independent} consistent
deformations are naturally included in those that are possibly field
dependent, $H^i_\Def(M,e_\varphi)\sse H^i_\Def(F,e_\varphi)$. Of course,
any non-trivial field-independent consistent deformation is still
non-trivial on-shell, while any non-trivial null consistent deformation
cannot be field-independent. Since in this paper we are chiefly
concerned with non-linearities, all consistent deformations considered
in the sequel will be field-dependent and of at least quadratic order.

\subsection{Cosymmetries and conserved currents}\label{sec:cosym}
For PDE systems that satisfy the regularity, linearizability and
normality conditions given previously in Section~\ref{sec:pde}, it is
well known that there is a direct correspondence between the spaces
$H^p_\cur(\E_\varphi)$ of conserved local $p$-currents (which is the
same as the space $H^{(n-p)}_\Char(\E)/H^{(n-p)}_\dR(M)$ of on-shell
closed local $(n-p)$-forms modulo field-independent ones) and the spaces
of on-shell stage-$(p-1)$ cohomology of the complex formally adjoint to
the linearized Noether complex~\ref{eq:lNcpx}. We refer to the latter
objects as \emph{stage-$(p-1)$ cosymmetries} and denote their space by
$H^{(p-1)}_\cosym(e_\varphi)$; they are defined more precisely below.
For simplicity, since we need only deal with the linearized Noether
complex below, we restrict our discussion of this correspondence purely
to the case of linear PDE systems.

If $h\colon J^\oo F\to H$ is a linear differential operator, we define
its \emph{formal adjoint} $h^*$ as follows. For any vector bundle $H\to
M$, denote by $H^*\to M$ its dual bundle and by $\tilde{H}^* = \Lambda^n
M\otimes H^*$ the \emph{densitized dual} bundle. There is a natural
fiber-wise pairing between sections $\eta\colon M\to H$ and dual
densities $\tilde{\alpha}^*\colon M\to \tilde{H}^*$, $\eta\cdot
\tilde{\alpha}^* \colon M\to \mathbb{R}$. The \emph{formal adjoint} $h^*
\colon J^\oo\tilde{H}^* \to \tilde{F}^*$ is defined as the unique linear
differential operator that satisfies
\begin{equation}\label{eq:green-form}
	h[\psi]\cdot \tilde{\alpha}^* - \psi\cdot h^*[\tilde{\alpha}^*]
	= \d \G_h[\psi,\tilde{\alpha}^*]
\end{equation}
for arbitrary sections $\psi\colon M\to F$ and $\tilde{\alpha}^*\colon
M\to \tilde{H}^*$ and some bilinear bidifferential operator
$\G_h[\psi,\tilde{\alpha}^*] \colon J^\oo(F\times \tilde{H}^*) \to
\Lambda^{n-1}M$, where $\G_h$ is called a \emph{Green form} associated
to the adjoint pair $h$ and $h^*$. Note that we can consider $\G_h$ as
an $F\times \tilde{H}^*$-local horizontal form, $\G_h\in
\Omega^{n-1,0}(F\times \tilde{H}^*)$.  Also, $\G_h$ is only defined up
to the addition of an exact local horizontal form
$\G_h[\psi,\tilde{\alpha}^*] \sim \G_h[\psi,\tilde{\alpha}^*] + \d_h
\H[\psi,\tilde{\alpha}^*]$, where $\H$ is itself a bilinear
bi-differential operator, so only the corresponding equivalence class
$[\G_h]$ is well defined, though we may restrict our attention only to
representatives that are bilinear in their arguments. If $h[\psi;\xi]$
is a differential operator that is linear only in its second argument,
we can still define its formal adjoint with respect to the second
argument alone. It is again denoted by $h^*[\psi;\tilde{\alpha}^*]$ and
is also linear in its second argument, while an associated Green form is
denoted by $\G_h[\psi;\xi,\tilde{\alpha}^*]$ and is bilinear in its last
two arguments.

It is straight forward to verify that for a stage-$r$ irreducible,
locally regular, locally linearizable, locally normal PDE system, the
formal adjoint of its linearized Noether complex is also formally exact.
Also, since all the operators involved are linear, the subspaces of null
local sections, $\hSecs_{\E_\varphi}(\tilde{F}^*)\sso
\Secs_F(\tilde{F}^*)$ and $\hSecs_{\E_\varphi}(\tilde{Z}^{r*}) \sso
\Secs_F(\tilde{Z}^{r*})$, are preserved. That is, we can construct the
following commuting bicomplex, where the columns are exact, while the
middle row is formally exact:
\begin{equation}\label{eq:alNcpx}\vcenter{\xymatrix{
	\hSecs_{\E_\varphi}(\tilde{F}^*) \ar@{<-}[r]^-{e_\varphi^*} \ar[d] &
	\hSecs_{\E_\varphi}(\tilde{Z}^{0*}) \ar@{<-}[r]^-{z^{0*}_\varphi} \ar[d] &
	\hSecs_{\E_\varphi}(\tilde{Z}^{1*}) \ar@{<-}[r]^-{z^{1*}_\varphi}\ar[d]&\cdots
	\hSecs_{\E_\varphi}(\tilde{Z}^{r*}) \ar@{<-}[r] \ar[d] & 0
	\\
	\Secs_F(\tilde{F}^*) \ar@{<-}[r]^-{e_\varphi^*} \ar[d] &
	\Secs_F(\tilde{Z}^{0*}) \ar@{<-}[r]^-{z^{0*}_\varphi} \ar[d] &
	\Secs_F(\tilde{Z}^{1*}) \ar@{<-}[r]^-{z^{1*}_\varphi} \ar[d]& \cdots
	\Secs_F(\tilde{Z}^{r*}) \ar@{<-}[r] \ar[d] & 0 .
	\\
	\Secs_{\E_\varphi}(\tilde{F}^*) \ar@{<-}[r]^-{e_\varphi^*} &
	\Secs_{\E_\varphi}(\tilde{Z}^{0*}) \ar@{<-}[r]^-{z^{0*}_\varphi} &
	\Secs_{\E_\varphi}(\tilde{Z}^{1*}) \ar@{<-}[r]^-{z^{1*}_\varphi} & \cdots
	\Secs_{\E_\varphi}(\tilde{Z}^{r*}) \ar@{<-}[r] & 0
}}\end{equation}
Finally, we define (for $r\ge 0$) \emph{null}, \emph{(off-shell)} and
\emph{on-shell local (stage-$r$) cosymmetries} to be elements of the
kernel of the local differential operator $z^{i*}_\varphi$ on,
respectively, the top, middle and bottom rows of the above bi-complex;
by convention we define $z^{-1}_\varphi = e_\varphi$ and $Z^{-1} = F$.
That is, $\tsxi{r}\in \Secs_F(\tilde{Z}^{r*})$ represents an on-shell
local cosymmetry if $z^{(r-1)*}_\varphi[\psi;\tsxi{r}[\psi]] = 0$
whenever $e_\varphi[\psi] = 0$. The cohomologies of these rows define
equivalence classes of cosymmetries. A cosymmetry $\tsxi{r}\in
\Secs_F(\tilde{Z}^{r*})$ is \emph{trivial} if, respectively,
\begin{align}
	\text{(null)} \quad
	\tsxi{r}[\psi] &= z^r_\varphi[\tsxi{r+1}[\psi;e_\varphi[\psi]]] , \\
	\text{(off-shell)} \quad
	\tsxi{r}[\psi] &= z^r_\varphi[\tsxi{r+1}[\psi]] , \\
	\text{(on-shell)} \quad
	\tsxi{r}[\psi] &= z^r_\varphi[\tsxi{r+1}[\psi]] +
		\tilde{\zeta}^{r*}[\psi;e_\varphi[\psi]] .
\end{align}
Two representatives are in the same equivalence class precisely when
they differ by a trivial one. The spaces of equivalence classes of null,
off-shell and on-shell stage-$r$ local cosymmetries, respectively, by
$\hat{H}^r_\cosym(e_\varphi)$, $H^r_\cosym(F,e_\varphi)$ and
$H^r_\cosym(e_\varphi)$. We call these equivalence classes
\emph{rigid cosymmetries}. The definition of a cosymmetry explicitly
involves the equation form $e_\varphi$, rather than just the intrinsic
PDE submanifold $\E_\varphi$, and the notation reflects that.

Since all operators involved in the definition of the adjoint linearized
Noether complex are field-independent, we can also define the
\emph{field-independent} adjoint linearized Noether complex
\begin{equation}\label{eq:ialNcpx}\xymatrix{
	\Secs(\tilde{F}^*) \ar@{<-}[r]^-{e_\varphi^*} &
	\Secs(\tilde{Z}^{0*}) \ar@{<-}[r]^-{z^{0*}_\varphi} &
	\Secs(\tilde{Z}^{1*}) \ar@{<-}[r]^-{z^{1*}_\varphi} & \cdots
	\Secs(\tilde{Z}^{r*}) \ar@{<-}[r] & 0 ,
}\end{equation}
whose cohomology spaces we denote by $H^r_\cosym(M,e_\varphi)$. We have
the natural inclusion $H^r_\cosym(M,e_\varphi) \sse
H^r_\cosym(F,e_\varphi)$. Of course, any non-trivial field-independent
local cosymmetry is still non-trivial on-shell, while any non-trivial
null local cosymmetry cannot be field-independent.

We conclude this section recalling the following known bijection between
equivalence classes of higher stage cosymmetries and higher conserved
currents. We state the result only for the linearized equation, though a
similar result works for non-linear equations as well.
\begin{prop}[Generalized Noether's first theorem]\label{prp:gN1}
For the linearized PDE system $\E_\varphi$, the space of classes of
conserved local currents is isomorphic to the space of classes of null
local forms, which in turn is isomorphic to the space of classes of
rigid on-shell local cosymmetries. That is, for $0 < p \le n$,
\begin{equation}
	H^{p+1}_\cur(\E_\varphi)
	\cong H^{n-p}_\Null(\E_\varphi)
	\cong H^p_\cosym(e_\varphi) .
\end{equation}
\end{prop}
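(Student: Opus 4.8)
The plan is to establish the two isomorphisms separately, in the order $H^{p+1}_\cur(\E_\varphi) \cong H^{n-p}_\Null(\E_\varphi)$ first, then $H^{n-p}_\Null(\E_\varphi) \cong H^p_\cosym(e_\varphi)$. For the first isomorphism, I would work with the short exact sequence of complexes $0 \to \hOmega^{*,0}(\E_\varphi) \to \Omega^{*,0}(F) \to \Omega^{*,0}(\E_\varphi) \to 0$ with respect to the horizontal differential $\dh$, and pass to the associated long exact sequence in cohomology. The connecting homomorphism $\partial\colon H^{n-p}_\Char(\E_\varphi) \to H^{n-p+1}_\Null(\E_\varphi)$ is the key map. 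Using that $\Omega^{*,0}(F)$ has $\dh$-cohomology equal to $H^*_\dR(M)$ at the relevant degrees (the statement about the underlined nodes in the variational bicomplex), the long exact sequence collapses to give $H^{n-p}_\Char(\E_\varphi)/H^{n-p}_\dR(M) \cong H^{n-p+1}_\Null(\E_\varphi)$ for $0 < p \le n$, which after reindexing (recall $H^{p+1}_\cur(\E_\varphi) = H^{n-p-1}_\Char(\E_\varphi)/H^{n-p-1}_\dR(M)$, so we want this $\cong H^{n-p}_\Null$) is precisely the first claimed isomorphism. I must be careful with the degree bookkeeping: a conserved $(p+1)$-current is a class of on-shell-closed horizontal $(n-p-1)$-forms, so "lifting to $\Omega^{*,0}(F)$ and applying $\dh$" produces a null horizontal $(n-p)$-form, matching $H^{n-p}_\Null$.

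For the second isomorphism, $H^{n-p}_\Null(\E_\varphi) \cong H^p_\cosym(e_\varphi)$, I would use the local regularity hypothesis, which (as stated in the excerpt) says that null local forms are exactly those exact in the Koszul-Tate part of the BV-BRST complex, equivalently that a null horizontal form factors through the prolonged equation form $e_\varphi$. Concretely, a $\dh$-closed null $(n-p)$-form $\omega[\psi]$ can be written using the Green-form pairing: $\omega[\psi] = \tsxio[\psi]\cdot e_\varphi[\psi] + \dh(\cdots)$ for some densitized dual section $\tsxio$, and $\dh \omega = 0$ forces, via the Green formula~\eqref{eq:green-form} and the formal exactness of the adjoint linearized Noether complex, the condition $e_\varphi^*[\psi;\tsxio[\psi]] = 0$ on-shell. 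Running this in reverse and tracking the ambiguities, one sees that the assignment $\omega \mapsto \tsxio$ descends to a well-defined map on cohomology, with triviality of $\omega$ (i.e.\ $\omega = \dh(\text{null})$) matching triviality of $\tsxio$ as an on-shell cosymmetry (i.e.\ $\tsxio = z^0_\varphi[\tilde\xi^{1*}] + \tilde\zeta^*\!\cdot e_\varphi$). The higher-stage structure, needed to handle general $p$ rather than just $p=1$, comes from iterating: a null $(n-p)$-form's "obstruction" to being $\dh$-exact-within-nulls is itself measured by a stage-$1$ null form paired against $z^0_\varphi$, and so on up the Noether tower, so one gets a stage-$(p-1)$ descent matching the definition of $H^p_\cosym$. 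This part is essentially the content of the Koszul-Tate acyclicity / characteristic cohomology computation of~\cite{bbh-rep}, adapted to the linearized complex, and I would cite that development while spelling out the adjoint bookkeeping.

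The main obstacle I expect is the second isomorphism, specifically making the descent argument precise at higher stages $p > 1$: one must show that the spectral-sequence-type obstruction chain truly terminates and reproduces the adjoint Noether complex~\eqref{eq:alNcpx} degree-for-degree, including getting the "on-shell" equivalence relations to match on both sides. This requires simultaneously invoking (i) formal exactness of the linearized Noether complex~\eqref{eq:lNcpx}, (ii) formal exactness of its formal adjoint (asserted in the excerpt), and (iii) the local regularity factorization properties, and checking that Green forms can be chosen coherently so that the pairings compose correctly. The first isomorphism, by contrast, is a relatively mechanical long-exact-sequence argument once the $\dh$-cohomology of $\Omega^{*,0}(F)$ is granted. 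I would structure the write-up so that the routine first step is dispatched quickly and the bulk of the effort goes into the adjoint/Koszul-Tate identification, likely organized as a pair of mutually inverse explicit constructions between representative cocycles, followed by a check of well-definedness on cohomology.
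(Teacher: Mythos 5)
The first thing to note is that the paper does not prove this proposition at all: it is stated as a known result and deferred to the literature (\cite{bbh-rep,verbov1,verbov2}), so there is no in-paper argument to match yours against. The closest in-paper content is Lemma~\ref{lem:null-src}, which establishes the chain $H^p_\src(\E_\varphi)\cong H^{n-p}_\Null(\E_\varphi)\cong H^{n-p-1}_\Char(\E_\varphi)/H^{n-p-1}_\dR(M)\cong H^{p+1}_\cur(\E_\varphi)$ by a direct hands-on argument (a null source is $\dh j$ for a conserved current $j$, and conversely); your long-exact-sequence derivation of the first isomorphism is a clean repackaging of exactly that, and is correct. Two small points there: for the sequence to \emph{collapse} you also need the map $H^{n-p}_\Null(\E_\varphi)\to H^{n-p}(\Omega^{*,0}(F),\dh)\cong H^{n-p}_\dR(M)$ to vanish, not merely to identify the quotient with a subgroup of $H^{n-p}_\Null$; this holds because a null form evaluated on the zero section (a solution of the linearized system) vanishes, so its field-independent de~Rham representative is zero. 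And the restriction $p>0$ is precisely what keeps you away from horizontal degree $n$, where $H^{n}(\Omega^{*,0}(F),\dh)$ is \emph{not} $H^n_\dR(M)$ without the Euler--Lagrange augmentation -- worth saying explicitly, since it explains the stated range.

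For the second isomorphism your strategy (Koszul--Tate descent up the Noether tower, with the adjoint complex appearing through Green-form pairings) is the standard one and is indeed what the cited sources do; you correctly identify it as the genuinely hard part. One technical slip: the displayed factorization $\omega[\psi]=\tsxio[\psi]\cdot e_\varphi[\psi]+\dh(\cdots)$ does not typecheck at horizontal degree $n-p<n$, because the densitized-dual pairing $\tsxio\cdot e_\varphi$ produces an $n$-form. At degree $n-p$ local regularity only gives $\omega[\psi]=\omega[\psi;e_\varphi[\psi]]$ with $\omega[\psi;\xi]$ linear in $\xi$; the Green-form/adjoint pairing that extracts a section of $\tilde{Z}^{p*}$ only becomes available once the descent has climbed to top horizontal degree (equivalently, to antifield number $p+1$ in the Koszul--Tate complex). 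Your subsequent description of the iteration suggests you know this, but as written the first display of that paragraph is the one step a referee would flag. With that repaired, the outline is a faithful reconstruction of the literature proof the paper points to.
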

Though the proof of Proposition~\ref{prp:gN1} can be found in several
places in the literature~\cite{bbh-rep,verbov1,verbov2}.

The name \emph{cosymmetry} is meant to be evocative. For a variational
PDE system (one defined by the Euler-Lagrange equations of a local
Lagrangian), the linearized equation form $e_\varphi\colon \Secs_F(F)
\to \Secs_F(\tilde{F}^*)$ is formally self-adjoint, $e_\varphi^* =
e_\varphi$.  In that case, the adjoint linearized Noether complex is
identical to the formally exact complex of generators of \emph{(higher
stage) gauge transformations} (generalized Noether's second theorem). It
is well known that the cohomology of higher stage gauge generators is
identified with higher stage rigid symmetries, those of stage-$0$ being
the ordinary symmetries. Stage-$1$ rigid symmetries have also been
called \emph{reducibility parameters}~\cite{barnich-brandt}. For
non-variational system, there need not be any correspondence between
gauge generators and Noether operators or between symmetries and
cosymmetries, hence the need for the distinct terminology.

In the case of stage-$0$ irreducible relativistic field theories and the
Einstein or Yang-Mills gauge theories, rigid symmetries are known to
correspond to so-called Killing vectors (or generalizations
thereof~\cite{moncrief-killing1,moncrief-ym}). By analogy, we can call
the condition
\begin{equation}\label{eq:co-kil}
	z^{(r-1)*}_\varphi[\tsxi{r}[\psi]] = 0
	~~\text{and}~~
	\tsxi{r}[\psi] \ne z^{r*}_\varphi[\tsxi{(r+1)}[\psi]
	~~\text{(on-shell)} ,
\end{equation}
which defines a non-trivial cosymmetry, the \emph{(generalized) co-Killing
condition}.

\subsection{Null sources}\label{sec:src}
A \emph{null local $p$-source} $\rho$ is a local horizontal
$(n-p)$-form, $\rho\in \Omega^{n-p,0}(F)$, such that (i) it is
horizontally exact, $\rho = \dh j$ for some local horizontal
$(n-p-1)$-form $j\in \Omega^{n-p-1,0}(F)$, and (ii) it vanishes on
linearized solutions, that is, it pulls back to $0$ in $\Omega^{n-p,0}(\E_\varphi)$ along
the inclusion $\E_\varphi^\oo \sso J^\oo F$, or simply $\rho\in
\hOmega^{n-p,0}(\E_\varphi)$. A null local source $\rho$ is said to be
\emph{trivial} if $\rho = \dh j$ where $j$ itself vanishes on solutions.
Two null local sources are said to be \emph{equivalent} if they differ
by a trivial one. We denote the space of null local $p$-source classes
by $H^p_\src(\E_\varphi)$.

The term \emph{source} is meant to be evocative. Consider an equation of
the form
\begin{equation}
	\d j = \rho ;
\end{equation}
if the left-hand side can be considered as the divergence of a current,
then the right-hand side should be considered the source (or source
density) for that current, whence the exactness requirement.

Null sources are clearly related to null local forms, as well as to
characteristic cohomology.
\begin{lem}\label{lem:null-src}
We have $H^n_\src(\E_\varphi) = H^0_\Null(\E_\varphi) = 0$ and for $0
\le p < n$
\begin{equation}
	H^p_\src(\E_\varphi) \cong
	H^{n-p}_\Null(\E_\varphi) \cong
	H^{n-p-1}_\Char(\E_\varphi)/H^{(n-p-1)}_\dR(M) \cong
	H^{p+1}_\cur(\E_\varphi) .
\end{equation}
\end{lem}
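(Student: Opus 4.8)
The plan is to extract everything from the long exact cohomology sequence of a single short exact sequence of $\dh$-complexes. Let $\iota\colon\E_\varphi^\oo\sso J^\oo F$ be the inclusion. Since $\hOmega^{\bullet,0}(\E_\varphi)$ is by definition the kernel of $\iota^*$ and $\Omega^{\bullet,0}(\E_\varphi)$ its image, and since $\dh$ commutes with $\iota^*$, there is a short exact sequence of complexes
\begin{equation*}
	0 \to \hOmega^{\bullet,0}(\E_\varphi) \to \Omega^{\bullet,0}(F) \xrightarrow{\iota^*} \Omega^{\bullet,0}(\E_\varphi) \to 0 ,
\end{equation*}
whose cohomology groups are, respectively, $H^\bullet_\Null(\E_\varphi)$, then $H^{\bullet,0}(\dh)\cong H^\bullet_\dR(M)$ (the underlined-node identification recalled in Section~\ref{sec:jets}), and $H^\bullet_\Char(\E_\varphi)$. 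The associated long exact sequence reads
\begin{equation*}
	\cdots \to H^h_\Null(\E_\varphi) \to H^h_\dR(M) \to H^h_\Char(\E_\varphi) \xrightarrow{\delta} H^{h+1}_\Null(\E_\varphi) \to \cdots ,
\end{equation*}
and the map $H^h_\dR(M)\to H^h_\Char(\E_\varphi)$ is the inclusion of field-independent classes from Section~\ref{sec:pde}, hence injective. Therefore $H^h_\Null(\E_\varphi)\to H^h_\dR(M)$ is the zero map for every $h$; in particular $H^0_\Null(\E_\varphi)=0$, and for $h\ge 1$ the connecting homomorphism induces an isomorphism $H^{h-1}_\Char(\E_\varphi)/H^{h-1}_\dR(M)\xrightarrow{\sim}H^h_\Null(\E_\varphi)$. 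Taking $h=n-p$ gives the middle isomorphism of the lemma, and the rightmost one is then merely the definition $H^{p+1}_\cur(\E_\varphi)=H^{n-p-1}_\Char(\E_\varphi)/H^{n-p-1}_\dR(M)$.

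It remains to prove $H^p_\src(\E_\varphi)\cong H^{n-p}_\Null(\E_\varphi)$. The key observation is that, for $0\le p<n$, a null local $p$-source is exactly a $\dh$-cocycle of $\hOmega^{n-p,0}(\E_\varphi)$: such a source is $\dh$-closed (being $\dh$-exact in $\Omega^{\bullet,0}(F)$ by condition (i)) and lies in $\hOmega^{n-p,0}(\E_\varphi)$ by condition (ii); conversely, given a $\dh$-closed $\rho\in\hOmega^{n-p,0}(\E_\varphi)$, its image under the zero map $H^{n-p}_\Null(\E_\varphi)\to H^{n-p}_\dR(M)$ established above equals the class of $\rho$ in $H^{n-p}_\dR(M)$, so $\rho=\dh j$ for some $j\in\Omega^{n-p-1,0}(F)$, i.e.\ condition (i) is automatic. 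Since, moreover, a null source is trivial precisely when it equals $\dh j$ with $j\in\hOmega^{n-p-1,0}(\E_\varphi)$, i.e.\ precisely when it is a coboundary of the null complex, we get $H^p_\src(\E_\varphi)=H^{n-p}_\Null(\E_\varphi)$ verbatim. The case $p=n$ is handled directly: condition (i) then requires $j\in\Omega^{-1,0}(F)=0$, so the only null $n$-source is $0$, giving $H^n_\src(\E_\varphi)=0$, which together with $H^0_\Null(\E_\varphi)=0$ from the previous paragraph establishes the first displayed equality.

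The argument is essentially formal, so there is no genuinely hard step; the point that needs care is the redundancy of condition (i) in the definition of a null source, which rests entirely on the vanishing of $H^{n-p}_\Null(\E_\varphi)\to H^{n-p}_\dR(M)$ and hence on the injectivity of $H^\bullet_\dR(M)\hookrightarrow H^\bullet_\Char(\E_\varphi)$. Beyond that, one only needs to check that the short exact sequence above is exact in every horizontal degree $0\le h\le n$ (immediate from the definitions of $\hOmega$ and $\Omega^{\bullet,0}(\E_\varphi)$, the pullback being surjective onto its image) and that the bicomplex identification $H^{h,0}(\dh)\cong H^h_\dR(M)$ applies to the off-shell complex $\Omega^{\bullet,0}(F)$, as recorded earlier.
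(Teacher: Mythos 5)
Your route through the long exact cohomology sequence of $0 \to \hOmega^{\bullet,0}(\E_\varphi) \to \Omega^{\bullet,0}(F) \to \Omega^{\bullet,0}(\E_\varphi) \to 0$ is genuinely different from the paper's proof, which instead constructs the bijection $H^p_\src(\E_\varphi) \cong H^{p+1}_\cur(\E_\varphi)$ directly ($\rho = \dh j \leftrightarrow j$, checking both directions and the behaviour on equivalence classes) and treats the remaining identifications as definitional. For $0 < p < n$ your argument is correct and arguably tidier: the injectivity of $H^{h}_\dR(M) \hookrightarrow H^{h}_\Char(\E_\varphi)$ really does force the map $H^h_\Null(\E_\varphi) \to H^h_\dR(M)$ to vanish and the connecting homomorphism to be an isomorphism, and your observation that condition (i) in the definition of a null source is then automatic is a point the paper leaves implicit.

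There is, however, a genuine gap at $h=n$, i.e.\ at $p=0$ (the case of ordinary conserved $1$-currents and stage-$0$ cosymmetries). You invoke ``$H^{\bullet,0}(\dh)\cong H^\bullet_\dR(M)$'' for the middle complex in every degree, but this identification fails at the node $\Omega^{n,0}(F)$: the cohomology of the \emph{unaugmented} top row in top horizontal degree is $\Omega^{n,0}(F)/\dh\Omega^{n-1,0}(F)$, the space of Lagrangian densities modulo total divergences, which is not $H^n_\dR(M)$. The identification with $H^n_\dR(M)$ at that corner holds only for $\ker\delta_{EL}/\operatorname{im}\dh$ --- this is precisely the role of the ``bent arrow'' proviso in Section~\ref{sec:jets}. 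Consequently your claim that $H^n_\Null(\E_\varphi)\to H^n(\Omega^{\bullet,0}(F),\dh)$ is the zero map is unjustified, and in fact false in general: $e_\varphi[\psi]\cdot\tilde{\alpha}^*$ is a null $n$-form whose class modulo $\dh\Omega^{n-1,0}(F)$ is non-zero whenever $e_\varphi^*[\tilde{\alpha}^*]\ne 0$ (its Euler--Lagrange derivative is non-zero, so it cannot be a total divergence). Hence neither the surjectivity of the connecting map onto $H^n_\Null(\E_\varphi)$ nor the redundancy of condition (i) for $0$-sources follows from your argument. To close the $p=0$ case you must do what the paper does: keep the $\dh$-exactness of $\rho$ as part of the definition of a $0$-source and pass directly between sources and conserved $1$-currents, rather than trying to recover exactness from top-degree horizontal cohomology.
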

\begin{proof}
If follows directly from the above definition that for each null local
$p$-source $\rho$, we can find a local horizontal $(n-p-1)$-form $j$
such that $\dh j = \rho$. Clearly $j$ cannot be field-independent,
unless it vanishes, and $\dh j$ vanishes on-shell.  But this means
precisely that $j$ is a conserved $(p+1)$-current. Conversely, for each
conserved $(p+1)$-current $j$, we can define $\rho = \dh j$ and easily
check that $\rho$ is a null $p$-source. Moreover, this correspondence
respects equivalence classes. In other words, the classes of null
$p$-sources are in bijection with the equivalence classes of conserved
$(p+1)$-currents. The isomorphisms with $H^{n-p}_\Null(\E_\varphi)$ and
$H^{n-p-1}_\Char(\E_\varphi)/H^{n-p-1}_\dR(M)$ follow directly from
definitions.

The exactness of the variational bicomplex shows that any local
horizontal $0$-form is horizontally closed only if it is
field-independent and constant. Therefore, any horizontally closed, null
horizontal $0$-form must be trivial. On the other hand, any null
$n$-source must also be trivial, since the only exact local horizontal
form in this degree is zero. In other words, $H^n_\src(\E_\varphi) =
H^0_\Null(\E_\varphi) = 0$.
\end{proof}

Here is another convenient way to represent null sources.
A local horizontal $(n-p)$-form $\rho$ naturally defines a local
differential operator $\rho\colon \Secs(M)\to \Omega^{n-p}(M)$. If
$\rho$ is a null local source, local regularity of the linearized PDE
system implies that it can be written as $\rho = \rho(e_\varphi)$, where
on the right-hand side we denote the local section $e_\varphi\in
\Secs_F(G)$ is acted on by the linear local differential operator
$\rho\colon \Secs_F(G) \to \Omega^{n-p,0}(F)$. We write this as
$\rho[\psi] = \rho[\psi; e_\varphi[\psi]]$. The local differential
operator $\rho[\psi; \zeta]$ is not unique, since any other one of the
form $\rho[\psi; \zeta] + \sigma[\psi; z^0[\zeta]] +
\tau[\psi,e_\varphi[\psi],\zeta]$, where $\sigma$ is linear in its
second argument and $\tau$ is bilinear and anti-symmetric in its last
two arguments, would represent the same null $p$-source $\rho[\psi]$.
Local regularity implies that these possibilities exhaust the available
ambiguity. We call a linear local differential operator $\rho\colon
\Secs_F(G)\to \Omega^{*,0}(F)$ \emph{trivial} if it is of the form
\begin{equation}\label{eq:nsr-triv}
	\rho[\psi;\zeta] = \d j[\psi;\zeta] + \sigma[\psi;z^0_\varphi[\zeta]]
		+ \tau[\psi;e_\varphi[\psi],\zeta] .
\end{equation}
We have just observed that the space of equivalence classes of null
$p$-sources $H^p_\src(\E_\varphi)$ is isomorphic to the space of
equivalence classes $[\rho]$ of linear local differential operators
$\rho\colon \Secs_F(G) \to \Omega^{n-p,0}(F)$ as above, modulo trivial
ones.

\section{Linearization instability}\label{sec:lin-stab}
Consider a PDE system that is stage-$r$ irreducible, locally regular,
locally linearizable and locally normal, with the defining equation form
and Noether complex~\eqref{eq:Ncpx}. Similarly, its linearized equation
form and Noether complex about a linearizable background solution
$\varphi\in\S_\lin(\E)$ are given by~\ref{eq:lNcpx}. In this section, we
shall refer to the space $\S_\lin(\E)$ of linearizable background
solutions simply as the \emph{(exact) solution space}. We shall refer to
the space $\S(\E_\varphi)$ of solutions to the linearized PDE system at
$\varphi$ as the \emph{linearized solution space} at $\varphi$.
Both the exact and linearized solution spaces can be endowed with a
topology as subspaces of the total space of sections $\Secs(F)$. The
choice of topology on $\Secs(F)$ should be adapted to the problem at
hand. The work on linearization stability that followed in the footsteps
of~\cite{fima-linstab} identifies the space of solutions with the space
of Cauchy data satisfying initial value constraints, which is endowed
with a norm topology. Thus, the accompanying functional analytical steps
that require completeness introduce some weak solutions to the elliptic
initial value constraints along with smooth classical ones. On the other
hand, restricting our attention to a single Cauchy surface, rather than
an open spacetime domain, does not take into account the possibility of
singularity formation withing the domain, which excludes some initial
data from the solution space. The seminal result on the global
non-linear stability of Minkowski space~\cite{chr-kl} can be seen through the
prism of linearization stability analysis. In that work, another normed
topology was used on $\Secs(F)$. It would be interesting to also
examine the same questions using a more natural Fr\'echet (Whitney)
topology~\cite{hirsch,km} in which $\Secs(F)$ is complete, without the
need to introduce weak solutions, beyond the minimal attention this
question has already received in the existing literature.

We say that the background solution $\varphi\in \S_\lin(\E)$
is \emph{linearization stable} if (i) there is a neighborhood $U\sso
\S_\lin(\E)$ of $\varphi$ that is homeomorphic to the linearized
solution space $\S(\E_\varphi)$ and (ii) each $1$-parameter family
of linearized solutions of the form $t\psi$, $\psi\in \S(\E_\varphi)$ is
mapped by this homeomorphism to a smooth $1$-parameter family of exact
solutions, $t\psi \mapsto \phi_t$, where $\phi_t\in \S_\lin(\E)$ and
$\phi_t = \varphi + t\psi + O(t^2)$.

Obviously proving linearization stability is a difficult problem that
must involve a significant amount of functional analysis and at present
it has only been treated in detail for a few selected equations. On the
other hand, it is much easier to show that linearization stability is
obstructed. We devote the following subsections to explicitly exhibiting
global geometric conditions on the manifold $M$ and a background
solution $\varphi$ on it, which we call \emph{co-Killing conditions},
that imply the existence of obstructions, also called
\emph{linearization instabilities}.

\subsection{Linearization obstructions}\label{sec:obstr}
Linearization stability is \emph{obstructed} at a linearized solution
$\psi\in\S(\E_\varphi)$ if it cannot be extended to a smooth
$1$-parameter family of exact solutions of the form $\phi_t = \varphi +
t\psi + O(t^2)$; if such a $1$-parameter family does exist, then $\psi$
is called \emph{extendable}. A \emph{stability obstruction} is a
function on the linearized solution space, $Q\colon \S(\E_\varphi)\to
V$, where $V$ is some vector space, such that $Q(\psi) = 0$ whenever
$\psi$ is extendable. A linearization obstruction is said to be
\emph{trivial} if $Q(\psi) = 0$ for every linearized solution. The
obstruction is said to be of \emph{order $m$} if it is homogeneous of
the same order, $Q(t\psi) = t^m Q(\psi)$.

\subsection{Deformation currents from null sources}\label{sec:def-cur}
We are now ready to prove a theorem that relates null sources, consistent
deformations and conserved currents. An easy consequence of it will be
the existence of linearization obstructions, to be discussed in the next
section.
\begin{thm}[Deformation currents]\label{thm:def-cur}
Provided a linear PDE system $\E_\varphi$ with equation form
$e_\varphi[\psi]=0$ is locally regular, there exists a natural
bilinear mapping pairing a null local $p$-source class
with a stage-$0$ consistent deformation class giving a conserved local
$p$-current.
\begin{equation}
	j\colon H^p_\src(\E_\varphi) \times H^0_\Def(e_\varphi) \to
	H^p_\cur(\E_\varphi) ,
	\quad
	([\rho],[f]) \mapsto [j_{\rho,f}] .
\end{equation}
We call $j_{\rho,f}$ the \emph{deformation current} associated to $\rho$
and $f$.
\end{thm}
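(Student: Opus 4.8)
The plan is to construct the deformation current explicitly from chosen representatives and then verify that the construction is well defined on cohomology classes. Given a null local $p$-source $\rho$, local regularity lets us write $\rho[\psi] = \rho[\psi;e_\varphi[\psi]]$ for a linear local differential operator $\rho\colon \Secs_F(G)\to \Omega^{n-p,0}(F)$, as in the discussion following Lemma~\ref{lem:null-src}. Given a stage-$0$ consistent deformation class $[f]\in H^0_\Def(e_\varphi)$, pick a representative $f\in\Secs_F(G)$. The naive guess $\rho[\psi;f[\psi]]$ is a local horizontal $(n-p)$-form, and the first step is to show it is horizontally closed on-shell, i.e.\ defines a class in $H^{n-p}_\Char(\E_\varphi)$. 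The point is that $\rho = \dh j$ for some $j\in\Omega^{n-p-1,0}(F)$; writing $j[\psi] = j[\psi;e_\varphi[\psi]]$ by local regularity and taking $\dh$, one finds $\dh(\rho[\psi;\zeta])$ is a sum of terms each carrying a factor of $e_\varphi[\psi]$ or $z^0_\varphi[\zeta]$ (the latter because $z^0_\varphi e_\varphi = 0$ forces the $\zeta$-dependence to be annihilated by $\dh$ up to such a term). Setting $\zeta = f[\psi]$ and using $z^0_\varphi[f[\psi]] = 0$ on-shell (which is the defining property of a stage-$0$ consistent deformation), all remaining terms vanish when $e_\varphi[\psi]=0$. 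Hence $j_{\rho,f}[\psi] := \rho[\psi;f[\psi]]$ is on-shell horizontally closed.

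The second step is to check well-definedness as a class in $H^p_\cur(\E_\varphi) = H^{n-p}_\Char(\E_\varphi)/H^{n-p}_\dR(M)$, separately in each argument. If $f$ is changed by a trivial consistent deformation, $f[\psi]\mapsto f[\psi] + e_\varphi[g[\psi]] + h[\psi;e_\varphi[\psi]]$ as in~\eqref{eq:cdr-triv}, then $j_{\rho,f}$ changes by $\rho[\psi;e_\varphi[g[\psi]]] + \rho[\psi;h[\psi;e_\varphi[\psi]]]$; the second term vanishes on-shell, while the first is $\rho$ applied (through $e_\varphi$) to a local section, i.e.\ a null source pulled back against something, which one checks is $\dh$-exact modulo on-shell-vanishing terms — hence trivial in characteristic cohomology. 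Dually, if $\rho$ is changed by a trivial null source, i.e.\ $\rho[\psi;\zeta]$ is modified as in~\eqref{eq:nsr-triv} by $\dh$-exact terms, $\sigma[\psi;z^0_\varphi[\zeta]]$, and an antisymmetric $\tau$-term, then upon substituting $\zeta = f[\psi]$: the $\dh j$-piece changes $j_{\rho,f}$ by a $\dh$-exact form; the $\sigma$-piece vanishes because $z^0_\varphi[f[\psi]]=0$ on-shell; and the $\tau$-piece vanishes on-shell because it carries $e_\varphi[\psi]$. So the class $[j_{\rho,f}]\in H^p_\cur(\E_\varphi)$ depends only on $[\rho]$ and $[f]$. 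Bilinearity is immediate from the linearity of $\rho$ in its last slot and the linear structure on consistent deformation classes.

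The main obstacle I anticipate is the first step — establishing that $\dh(\rho[\psi;\zeta])$ really does decompose into a $z^0_\varphi[\zeta]$-term plus an $e_\varphi[\psi]$-term with no leftover. This is exactly where local regularity is used in its full strength: one needs that a local horizontal form built from $\rho$ and $j$ that vanishes whenever $e_\varphi[\psi]=0$ and $z^0_\varphi[\zeta]=0$ must factor through those two operators (the "factor-through-the-prolonged-equation-form" consequence of local regularity spelled out in Section~\ref{sec:pde}, applied on the bundle $F\times G$). Once that structural fact is in hand, the rest is bookkeeping with the ambiguities in~\eqref{eq:cdr-triv} and~\eqref{eq:nsr-triv}. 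A secondary technical point is keeping track of the difference between "$\dh$-exact" and "$\dh$-exact modulo on-shell-vanishing", but since we are landing in $H^{n-p}_\Char(\E_\varphi)$ — the cohomology of $\dh$ on on-shell forms — these coincide, so no separate argument is needed there.
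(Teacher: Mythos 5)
Your construction, the on-shell closedness argument, and the treatment of a trivial null source $\rho$ all match the paper's proof. The one place where you have a genuine gap is the well-definedness in $f$, specifically the term $\rho[\psi;e_\varphi[g[\psi]]]$ arising from the trivial deformation $e_\varphi[g[\psi]]$. You dismiss it as ``$\rho$ applied (through $e_\varphi$) to a local section, i.e.\ a null source pulled back against something, which one checks is $\dh$-exact modulo on-shell-vanishing terms.'' This is exactly the step the paper flags as ``trivial for a slightly non-obvious reason,'' and your gloss does not actually establish it: the identity $\rho[\Psi;e_\varphi[\Psi]]=\dh k[\Psi]$ makes the expression exact only when the \emph{same} section sits in both slots, whereas here the first slot carries $\psi$ and the second carries $e_\varphi[g[\psi]]$ with $g[\psi]\neq\psi$. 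There is no direct ``pullback of a null source'' that produces this term, so the exactness cannot be read off from the definitions.

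The missing idea is a field-redefinition (Taylor expansion) argument: set $\Psi=\psi+tg[\psi]$ in the identity $\rho[\Psi;e_\varphi[\Psi]]=\dh k[\Psi]$ and expand in $t$. The coefficient of $t$ on the left is $\rho[\psi;e_\varphi[g[\psi]]]+\rho^{(1)}[\psi;e_\varphi[\psi]]$, where $\rho^{(1)}$ is the first variation of $\rho$ in its first argument; the coefficient of $t$ on the right is $\dh$-exact. Since $\rho^{(1)}[\psi;e_\varphi[\psi]]$ is linear in $e_\varphi[\psi]$ and hence vanishes on-shell, one concludes that $\rho[\psi;e_\varphi[g[\psi]]]$ is exact up to an on-shell-vanishing correction, which is what you need. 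Without this (or an equivalent) argument your proof of well-definedness in the second slot is incomplete; everything else in your proposal tracks the paper's proof.
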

\begin{proof}
As discussed in the last two sections, provided the regularity condition
is satisfied, each null local $p$-source class $[\rho]$ can be
represented by a linear local differential operator $\rho\colon
\Secs_F(G) \to \Omega^{n-p,0}(F)$, while a stage-$0$ on-shell consistent
deformation class $[f]$ can be represented by a local section $f\in
\Secs_F(G)$. We define the local horizontal $(n-p)$-form $j_{\rho,f} \in
\Omega^{n-p,0}(F)$ by the formula $j_{\rho,f} = \rho(f)$; in other
words,
\begin{equation}
	j_{\rho,f}[\psi] = \rho[\psi;f[\psi]] ,
\end{equation}
for any section $\psi\colon M\to F$. It now remains to check that
$j_{\rho,f}$ is in fact conserved and that the map $j$ is well defined
on equivalence classes.

We have already established that the horizontal differential of a null
$p$-source representative $\rho$ has the form
\begin{equation}
	\d \rho[\psi;\zeta] = \sigma[\psi;z^0_\varphi[\zeta]] +
	\tau[\psi;e_\varphi[\psi],\zeta] .
\end{equation}
The second term already vanishes on-shell because it is linear in the
$e_\varphi[\psi]$ argument. If we set $\zeta=f[\psi]$, the first term on
the right-hand also vanishes on-shell, since $\sigma$ is linear in its
second argument and $z^0_\varphi[f[\psi]]$ vanishes on-shell by the
defining property of a stage-$0$ on-shell consistent deformation.
Therefore, $j_{\rho,f}$ is in fact a conserved local $p$-current.

If $\rho$ is a trivial null local $p$-source, then it must take the form
given in Equation~\eqref{eq:nsr-triv}, provided local regularity holds.
We check that each of the three possible terms is a trivial conserved
$p$-current after setting $\zeta = f[\psi]$. The term $\d j[\psi;\zeta]$
is trivial because it is exact.  The term $\sigma[\psi;
z^0_\varphi[f[\psi]]]$ is trivial because $z^0_\varphi[f[\psi]]$
vanishes on-shell by the definition of a consistent deformation.
Finally, the term $\tau[\psi;e_\varphi[\psi], f[\psi]]$ vanishes
on-shell because it is linear in the $e_\varphi[\psi]$ argument.

If $f$ is a trivial on-shell consistent deformation, then it must take
the form given in Equation~\eqref{eq:cdr-triv}, provided local
regularity holds. We check that each possible term substituted for
$\zeta$ in $\rho[\psi;\zeta]$ gives a trivial conserved local
$p$-current. The second term gives $\rho[\psi; h[\psi;
e_\varphi[\psi]]]$, which is linear in $e_\varphi[\psi]$ and hence
trivial. The remaining term $\rho[\psi; e_\varphi[g[\psi]]]$ is trivial
for a slightly non-obvious reason: a deformation of the form
$e_\varphi[g[\psi]]$ necessarily comes from a local field redefinition
of the form $\psi \to \psi + tg[\psi] + O(t^2)$. Namely, recall that we
can always write $\rho[\psi;e_\varphi[\psi]] = \d k[\psi]$ for some
conserved local $(p+1)$-current $k$ and consider the identity
\begin{align}
	\d k[\psi + tg[\psi]]
		&= \rho[\psi + tg[\psi]; e_\varphi[\psi+tg[\psi]]] \\
		&= t \rho[\psi + tg[\psi]; e_\varphi[g[\psi]]]
			+ \rho[\psi + tg[\psi]; e_\varphi[\psi]] \\
		&= t (\rho[\psi; e_\varphi[g[\psi]]]
			+ \rho^{(1)}[\psi; e_\varphi[\psi]]) + O(t^2) .
\end{align}
When expanded in powers of $t$, all coefficients on the left-hand side
are exact. This shows that, up to the addition of the trivial conserved
local $p$-current $\rho^{(1)}[\psi;e_\varphi[\psi]]$, the term
$\rho[\psi; e_\varphi[g[\psi]]]$ is exact and hence trivial. This
concludes the proof.
\end{proof}

\begin{rem}
It is interesting to note that the deformation current mapping
\begin{equation}
	([\rho],[f])\mapsto [j_{\rho,f}],
\end{equation}
defined in Theorem~\ref{thm:def-cur}, identifies a selection criterion
on consistent deformations. We could say that a consistent deformation
$f$ is \emph{conservative} or \emph{$\rho$-conservative} when the
deformation current $j_{\rho,f}$ is trivial (on-shell exact). The
linearization obstruction generated by the deformation current
$j_{\rho,f}$ for a $\rho$-conservative deformation $f$, as to be
discussed in the next section, is necessarily trivial.
\end{rem}

\subsection{Obstructions from deformation currents}\label{sec:obstr-def-cur}
Consider the full non-linear PDE system $\E$ defined by the equation
form $e[\phi] = 0$, which we take to be locally regular, linearizable
and normal. Also, consider a linearizable background solution
$\varphi\in \S_\lin(\E)$. Recall that, a smooth $1$-parameter family of
solutions, $e[\phi_t] = 0$, of the form $\phi_t = \varphi + \psi_t$
satisfies Equation~\eqref{eq:lin-e}, that is,
\begin{equation}
	e_\varphi[\psi_t] = f_\varphi[\psi_t] .
\end{equation}
If $\psi_t = t\psi + O(t^2)$, then $\psi$ is a linearized solution,
$e_\varphi[\psi] = 0$, that can be extended to a smooth $1$-parameter
family of exact solutions, for instance $\phi_t$. Recall also that the
non-linearity $f_\varphi$ defines a representative of a consistent
deformation class $[f^{(m)}]\in H^0_\Def(e_\varphi)$ given by the Taylor
expansion
\begin{equation}
	f_\varphi[t\psi] = t^m f^{(m)}[\psi] + O(t^{m+1})
\end{equation}
for some $m>1$, where $f^{(m)}[\psi]$ is homogeneous of degree $m$ in
$\psi$. The main result of this paper, to be proven below, is that each
deformation current defined by $f^{(m)}$ canonically defines a
linearization obstruction.

As an intermediate step before formulating the main result, we need to
introduce a grading by degree of polynomial dependence on field bundle
sections on the spaces of local sections and local differential
operators. Let $^{(l)}\Secs_F(H) \sso \Secs_F(H)$ denote the subspace of
local sections of a vector bundle $H\to M$ of \emph{homogeneous
polynomial degree $l$}, that is, such that $h\in {}^{(l)}\Secs_F(H)$
only if $h[\psi]$ depends polynomially on $\psi$ and its derivatives,
and is homogeneous in $\psi$ of degree $l$.  Note that the horizontal
differential $\dh$, as well as all the differential operators in the
linearized Noether complex~\ref{eq:lNcpx}, the adjoint linearized
Noether complex~\ref{eq:alNcpx} and the complex $\hSecs_F(H)\to
\Secs_F(H) \to \Secs_\E(H)$ all preserve subspaces of homogeneous
polynomial degrees. This means that any of the cohomology spaces that we
have defined also have subspaces of homogeneous polynomial degrees; the
corresponding subspaces are denoted by $^{(l)}H^*_-(-) \sso H^*_-(-)$.
The representatives of each of these
subspaces can always be chosen of the same homogeneous polynomial
degree.

We are finally ready to formulate and prove the main theorem of this
paper, which relates the de~Rham cohomology $H^*_\dR(M)$ and the space
$H^*_\cosym(\E_\varphi)$ of rigid on-shell cosymmetries with
linearization obstructions at $\varphi$, and thus with potential
linearization instability.
\begin{thm}\label{thm:obstr}
If $\psi\in \S(\E_\varphi)$ is a linearized solution that can be
extended to a smooth $1$-parameter family of exact solutions, then it
must necessarily satisfy the conditions $Q^l_p(\psi) = 0$, $0\le p < n$,
$0\le l$, where
\begin{equation}
	Q^l_p\colon \S(\E_\varphi) \to
		{}^{(l)}H^p_\cosym(e_\varphi)^* \otimes H^{n-p}_\dR(M)
\end{equation}
are linearization obstructions of order $l+m$ defined by the
non-linearity $f_\varphi$, where $^*$ denotes the linear dual.
\end{thm}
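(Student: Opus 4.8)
The plan is to combine Theorem~\ref{thm:def-cur} with the pairing between cosymmetries and conserved currents from Proposition~\ref{prp:gN1}, then exploit the de~Rham obstruction to integrating a conserved current against a closed form. First I would fix a stage-$0$ cosymmetry class $[\tsxio]\in{}^{(l)}H^0_\cosym(e_\varphi)$; by the generalized Noether theorem (Proposition~\ref{prp:gN1}, or rather Lemma~\ref{lem:null-src}), $\tsxio$ determines a null local $p$-source class $[\rho_{\tsxio}]\in H^p_\src(\E_\varphi)$ of the same homogeneous polynomial degree $l$, with $\rho_{\tsxio}[\psi;\zeta] = \zeta\cdot\tsxio[\psi] - \d(\cdots)$ up to the admissible ambiguity. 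The non-linearity $f_\varphi$ gives the consistent deformation class $[f^{(m)}]\in H^0_\Def(e_\varphi)$, and Theorem~\ref{thm:def-cur} produces the deformation current $j_{\rho_{\tsxio},f^{(m)}}$, a conserved local $p$-current of homogeneous polynomial degree $l+m$. Pairing this in the remaining free index with $[\tsxio]$ gives, for each linearized solution $\psi$, an honest closed $(n-p)$-form on $M$; evaluating its de~Rham class defines
\begin{equation}
	Q^l_p(\psi) := \bigl(\,[\tsxio]\mapsto [\,j_{\rho_{\tsxio},f^{(m)}}[\psi]\,]_\dR\,\bigr)
	\in {}^{(l)}H^p_\cosym(e_\varphi)^*\otimes H^{n-p}_\dR(M),
\end{equation}
which is manifestly homogeneous of order $l+m$ in $\psi$ and well defined by the triviality statements already proven in Theorem~\ref{thm:def-cur}.

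Next I would show $Q^l_p(\psi) = 0$ whenever $\psi$ is extendable. Suppose $\phi_t = \varphi + t\psi + O(t^2)$ is a smooth $1$-parameter family of exact solutions. The linearization formula~\eqref{eq:lin-e} gives $e_\varphi[\psi_t] = f_\varphi[\psi_t]$, so along this family $\zeta_t := e_\varphi[\psi_t] = f_\varphi[\psi_t] = t^m f^{(m)}[\psi] + O(t^{m+1})$. Feeding $\zeta_t$ into the null-source representative, on the one hand $\rho_{\tsxio}[\psi_t;\zeta_t]$ is horizontally exact because $\rho_{\tsxio}$ is a null \emph{source} (condition (i) in Section~\ref{sec:src}): $\rho_{\tsxio}[\psi_t;\zeta_t] = \dh j_{\tsxio}[\psi_t;\zeta_t]$ for the associated local $(n-p-1)$-form. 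On the other hand, reading off the coefficient of $t^m$ in the Taylor expansion of this identity, and using that $\psi_t = t\psi + O(t^2)$, the order-$t^m$ term of $\rho_{\tsxio}[\psi_t;\zeta_t]$ is exactly $j_{\rho_{\tsxio},f^{(m)}}[\psi]$ plus terms of the form $\rho_{\tsxio}^{(\text{lower})}[\psi;e_\varphi[\psi]]$, which vanish on-shell since $e_\varphi[\psi]=0$. Hence $j_{\rho_{\tsxio},f^{(m)}}[\psi]$ equals a $\dh$-exact horizontal form up to an on-shell-vanishing term, i.e.\ its de~Rham class is zero; since this holds for every representative $\tsxio$, $Q^l_p(\psi)=0$.

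The remaining items are bookkeeping: that $Q^l_p$ descends to cohomology classes in the cosymmetry slot (independence of the choice of $\tsxio$ within its class follows from the trivial-cosymmetry formulas of Section~\ref{sec:cosym} combined with the already-established well-definedness of $j$ in Theorem~\ref{thm:def-cur}), that it is independent of the choice of null-source representative $\rho_{\tsxio}$ (the ambiguity~\eqref{eq:nsr-triv}, when fed $\zeta=f^{(m)}[\psi]$, again produces on-shell-exact or on-shell-vanishing terms), and that it lands in the homogeneous polynomial degree $l$ dual space (all operators in sight preserve polynomial degree, as noted just before the theorem statement). The main obstacle I expect is the careful extraction of the $t^m$-coefficient in the second paragraph: one must verify that the cross-terms coming from $\psi_t = t\psi + \Psi_t$ with $\Psi_t = O(t^2)$, together with the higher Taylor terms of $\rho_{\tsxio}$ itself, organize precisely into (a) the desired bilinear deformation current and (b) terms linear in $e_\varphi[\psi]$ — this is where local regularity and the definition of $f^{(m)}$ as the \emph{leading} term are used, and it is the one place where a genuine (if routine) computation is unavoidable.
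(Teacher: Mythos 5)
Your proposal is correct and follows essentially the same route as the paper: identify a polynomial-degree-$l$ cosymmetry class with a null local $p$-source class via Proposition~\ref{prp:gN1} and Lemma~\ref{lem:null-src}, pair it with $[f^{(m)}_\varphi]$ through the deformation-current map of Theorem~\ref{thm:def-cur} to define $Q^l_p$, and prove vanishing for extendable $\psi$ by extracting the $t^{l+m}$ coefficient from the identity $\rho[\psi_t;f_\varphi[\psi_t]]=\rho[\psi_t;e_\varphi[\psi_t]]=\d k[\psi_t]$, which is exact. The only slip is cosmetic: you write ``stage-$0$ cosymmetry'' and ${}^{(l)}H^0_\cosym(e_\varphi)$ where the general case requires a stage-$p$ cosymmetry class in ${}^{(l)}H^p_\cosym(e_\varphi)$, as your own target space ${}^{(l)}H^p_\cosym(e_\varphi)^*\otimes H^{n-p}_\dR(M)$ already indicates.
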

\begin{proof}
First, recall the isomorphisms ($0\le p< n$)
\begin{equation}
	H^p_\cosym(e_\varphi) \cong H^{p+1}_\cur(\E_\varphi) \cong 
	H^{p}_\src(\E_\varphi)
\end{equation}
and note that they preserve subspaces of homogeneous polynomial degree.
That is, we can consider a null local $p$-source class $[\rho]\in
{}^{(l)}H^p_\src(\E_\varphi)$, equally well, to be an element $[\rho] \in
{}^{(l)}H^p_\cosym(e_\varphi)$. Using the on-shell consistent deformation class
$[f^{(m)}_\varphi]\in H^0_\Def(e_\varphi)$ and the map $j$ defined in
Theorem~\ref{thm:def-cur}, we obtain a conserved local $p$-current class
\begin{equation}
	j([\rho],[f^{(m)}_\varphi]) = [j_{\rho,f^{(m)}_\varphi}]
		\in {}^{(l+m)}H^p_\cur(\E_\varphi)
		\cong {}^{(l+m)}H^{n-p}_\Char(\E_\varphi) .
\end{equation}
The representative $j_{\rho,f^{(m)}_\varphi} \in \Omega^{n-p,0}(F)$ is a
local horizontal $(n-p)$-form and hence naturally defines a differential
operator $j_{\rho,f^{(m)}_\varphi} \colon \Secs(F) \to \Omega^{n-p}(M)$.
By construction, we know that $j_{\rho,f^{(m)}_\varphi}[\psi]$ is of
homogeneous polynomial degree $l+m$ and
that $\d j_{\rho,f^{(m)}_\varphi}[\psi] = 0$ if $\psi$
is a linearized solution. Hence it represents the de~Rham cohomology
class in $[\, j_{\rho,f^{(m)}_\varphi}[\psi] \, ] \in H^{n-p}_\dR(M)$.
Finally, we define the map $Q^l_p$ by the formula
\begin{equation}
	[\rho] \cdot Q^l_p(\psi) = [\, j_{\rho,f^{(m)}_\varphi}[\psi] \, ] ,
\end{equation}
where the dot on the left-hand side stands for the natural pairing
between $^{(l)}H^p_\cosym(e_\varphi)$ and its dual
$^{(l)}H^p_\cosym(e_\varphi)^*$.

Since $Q^l_p(\psi)$ is built out of a bilinear pairing between the
differential operator $\rho$ and $f^{(m)}_\varphi[\psi]$, the order
$Q^l_p(\psi)$ must be exactly $l+m$.

To conclude the proof, it is sufficient to show that $[\rho]\cdot
Q^l_p(\psi) = [0]$ for $\psi$ that is extendable to a smooth $1$-parameter
family of exact solutions $\phi_t = \varphi + \psi_t$, with $\psi_t =
t\psi + O(t^2)$. This family of exact solutions will satisfy the
equation $e_\varphi[\psi_t] = f_\varphi[\psi_t]$. Recall also that we
can find a conserved local $(p+1)$-current $k\in \Omega^{n-p-1,0}(F)$
such that $\rho[\Psi,e_\varphi[\Psi]] = \d k[\Psi]$ for arbitrary
$\Psi\in \Secs(F)$ and, in particular, this formula is still valid if we
set $\Psi = \psi_t$. We then have the following equalities in terms of
de~Rham cohomology classes:
\begin{align}
	[\, \rho[\psi_t;f_\varphi[\psi_t]] \,]
		&= t^{(l+m)} [\, \rho[\psi; f^{(m)}_\varphi[\psi]] \,] + O(t^{l+m+1})\\
		&= [\, \rho[\psi_t;e_\varphi[\psi_t]] \,]
		= [\, \d k[\psi_t] \,] \\
		&= [0] .
\end{align}
Note that comparing the coefficient of $t^{l+m}$ in these equations
gives precisely the desired equality $[\rho]\cdot Q^l_p(\psi) = 0$.
\end{proof}

\begin{rem}
The above theorem only shows how to canonically construct potential
linearization obstructions. It does not mean that the obstruction
$Q^l_p$ is necessarily non-trivial. For instance, if one can show that
the conserved current $j_{\rho,f^{(m)}_\varphi}$ is trivial, $[j_{\rho,
f^{(m)}_\varphi}] = [0] \in H^p_\cur(\E_\varphi)$ (equivalently, that
the consistent deformation $f^{(m)}$ is $\rho$-conservative, as defined
in the preceding section), then $[\rho]\cdot Q^l_p(\psi) = [0]$ is
trivial and does not pose any obstruction to extending linearized
solutions to exact ones. Further work must be done on a case by case
basis to show that the obstruction $Q^l_p$ is non-trivial.
Alternatively, one can use the triviality of $[\rho]\cdot Q^l_p(\psi)$,
checked in the way just given above, as a condition to select non-linear
consistent deformations like $f^{(m)}_\varphi$ that will not create
linearization obstructions (at least not of the kind constructed in
Theorem~\ref{thm:obstr}).
\end{rem}

Theorem~\ref{thm:obstr} is a significant generalization of the results
obtained in the literature that followed up the initial work of
\cite{brill-deser,fima-linstab}. The known linearization obstructions have
only been obtained in the case of either compact manifolds for equations
in Riemannian geometry or in the case of manifolds with compact Cauchy
surfaces for relativistic gauge theories. In light of our result, these
situations are immediately recognizable as obstructions coming
respectively from $H^n(M)\ne 0$ and from $H^{n-1}(M)\ne 0$, and also why
only gauge theories are susceptible in the latter case. Since the
equations considered in the literature have been stage-$0$ or stage-$1$
irreducible, our result also shows why similar obstructions are absent
for manifolds with $H^{n-p}(M)\ne 0$ and $p>1$. To see obstructions due
to these lower cohomology groups, one needs to consider non-linear
equations with higher stage reducibility and they are relatively
infrequent. Such gauge theories do appear in some specialized physics
literature, with non-linear $p$-form electromagnetism and some
supergravities as examples. They have also recently attracted
significant attention in the mathematics literature as \emph{higher
gauge theories}~\cite{baez-schr,schr-aksz}.

Finally, the calculations involved in identifying these linearization
obstructions have been rather cumbersome. This is especially the case
for relativistic gauge theories, where the necessary calculations lose
much of their geometric character due to non-covariant restriction to
some initial data surface. Our result, on the other hand, was obtained
completely geometrically and explains the covariance of the final
result. Famously, for Einstein equations, linearization obstructions
are only present if the background solution possesses non-trivial
Killing vectors (or satisfies the Killing condition). In particular, in
the existing work on relativistic gauge theories, significant effort was
needed in each case to obtain the analog of the Killing condition. Our
result, on the other hand, shows that this condition is precisely
Equation~\ref{eq:co-kil} defining a non-trivial cosymmetry, which for
obvious reasons we have also called the (generalized) co-Killing
condition.

The relation of our result to previous work is discussed in more detail
next.

\subsection{Integrated charges, relation with previous work}\label{sec:charge}
Note that the linearization obstruction $[\rho]\cdot Q^l_p(\psi)$
constructed in the preceding section is not strictly of the kind that
has been constructed in the existing literature on Einstein, Yang-Mills
and related equations. While both formulations pass through conserved
currents, we have given the obstructions as valued in the de~Rham
cohomology $H^*_\dR(M)$ classes of these currents, while the usual
formulation gives them in terms of corresponding integrated charges. An
integrated charge is obtained by integrating a conserved current over a
closed compact submanifold.

Our formulation can give rise to integrated charges as well. If $\Sigma
\sso M$ is a closed compact submanifold, then
\begin{equation}
	\langle [\Sigma] , [\rho] \cdot Q^l_p(\psi) \rangle
		= \int_\Sigma j_{\rho,f^{(m)}_\varphi}[\psi]
\end{equation}
is the corresponding integrated charge. The charges obtained in this way
are not all independent. Each charge depends only on the (singular)
homology class $[\Sigma]\in H_*(M)$. Moreover, linear combinations of
homology classes lead to linear combinations of charges. Therefore, to
obtain a set of independent charges, we need to pick a basis $[\Sigma_i]$ for
$H_*(M)$ and pair each basis element with $[\rho]\cdot Q^l_p(\psi)$. But
this amounts to no more than composing $Q^l_p$ with the isomorphism
$H^*_\dR(M) \cong \bigoplus_i \mathbb{R}^{b_i}$ defined by the basis
$[\Sigma_i]$, where the $b_i = \dim H_i(M)$ are the Betti numbers of $M$. The
fact that this map is an isomorphism is simply a restatement of
Poincar\'e duality~\cite{wiki-pd}. So, an integrated charge over a
non-trivial closed compact submanifold $\Sigma\sso M$ is simply a
witness to the existence of a non-trivial cohomology class in $H^*(M)$.
In other words, our formulation of linearization obstructions is
equivalent to the usual one.

\subsection{Asymptotic boundary conditions}\label{sec:asymp}
On the other hand, our formulation is more convenient in the discussion
of asymptotic boundary conditions on non-compact manifolds and
corresponding non-standard de~Rham cohomology. For instance, if $M$ is
non-compact, we may consider a suitable subspace of $\tSecs(F)\sso
\Secs(F)$ of the space sections of the field bundle $F\to M$, along with
a corresponding refinement $\tOmega^k(M)\sso \Omega^k(M)$ of the de~Rham
complex, with $\d \tOmega^k(M) \sse \tOmega^{k+1}(M)$. These subspaces
could be selected by imposing some asymptotic boundary conditions at the
open ends of $M$. An extreme example would be require all sections and
forms to have compact support. The cohomology $\tilde{H}^*_\dR(M) =
H(\tOmega^*(M),\d)$ could be different from the standard $H^*_\dR(M)$.
Consider, with respect to the linearized PDE system, a conserved local
current class $[k]$ and the corresponding null local source class $[\rho]
= [\dh k]$. If the local horizontal forms $\rho$ and $k$ can be chosen
such that the boundary conditions on $\psi\in \tSecs(F)$ imply that
$\rho[\psi], k[\psi]\in \tOmega^*(M)$. The construction of the
linearization obstruction $[\rho]\cdot Q^l_p(\psi)$ still works, but the
result is now valued in the non-standard de~Rham cohomology
$\tilde{H}^*_\dR(M)$ rather than $H^*_\dR(M)$. As before, it is still a
non-trivial problem to check that the resulting potential linearization
instability $Q^l_p$ is non-trivial. However, if the boundary conditions
are chosen such that $\tilde{H}^*(M) = 0$, then the linearization
obstruction yielded by this construction is necessarily trivial.

The above logic appears to be the reason behind the absence of
linearization obstructions for the common choice of asymptotically flat
boundary conditions for the Einstein equations~\cite{cb-deser,cbfm}.

\section{Examples}\label{sec:ex}
The PDE systems studied in the physics literature are mostly variational
(coming from classical Lagrangian field theories). These are also the
kind of systems analyzed in previous work on linearization instabilities.
Our analysis is applicable to a more general class of PDE systems,
including non-variational one. It would be nice to have explicit
examples of theories from each of the corners missed by the existing
literature. On the other hand, it is rather easy to provide examples of
non-variational PDE systems by taking a variational system, $e[\psi] =
0$, and pre-composing with an arbitrary differential operator, $e[g[\eta]]
= 0$. The resulting system will generically no longer be variational,
but will possess essentially the same Noether complex as the original
one. Thus, the essentially new examples that we give below are
restricted to $1$-dimensional systems (ODEs) and higher gauge theories.

\subsection{Stage-$0$ irreducible systems}

\subsubsection{ODEs}
Consider an ordinary differential equation (ODE) that is defined on a circle,
$M=S^1$, and scalar valued, $G=F=M\times \mathbb{R}$:
\begin{equation}
	\ddot{\phi} = f[\phi] ,
\end{equation}
with $\dot{\phi} = \d\phi/\d{t}$ and $t$ a coordinate on $M$. This is
essentially the $1$-dimensional particle equation with cyclic time and
force term $f$, which, say, is homogeneous of polynomial degree $>1$. If
we linearize about $\varphi=0$, the linearized equation is just the free
particle equation $\ddot{\phi}=0$ with cyclic time. Its only solutions
are constants, $\S(\E_\varphi) = \{\psi(t)\mid \psi(t) = \psi_0 \} \cong
\mathbb{R}^1$. It is straightforward to check that $\tsxi{}=\d{t}$ is a
non-trivial cosymmetry with corresponding conserved $1$-current $k[\psi]
= \dot{\psi}$ (the momentum) and null source $\rho[\psi] =
\ddot{\psi}\,\d{t}$.

If the non-linearity is $f_\varphi[\psi] = f[\psi] = \psi^2$,
the corresponding deformation $0$-current is
\begin{equation}
	j_{\rho,\psi^2}[\psi] = \psi^2\, \d{t} ,
\end{equation}
which is easily seen to be non-trivial. In fact, the integrated charge
\begin{equation}
	\langle [S^1],Q_\rho(\psi) \rangle
	= \int_{S^1} j_{\rho,\psi^2}[\psi]
	= c \psi_0^2 \ne 0,
\end{equation}
with $c$ the circumference of $S^1$, for any linearized solution
$\psi(t) = \psi_0 \ne 0$. Clearly, the zero set $Q_\rho(\psi) = 0$
consists of a single point, $\{\psi(t)\mid \psi(t)=0\} \cong
\mathbb{R}^0$.

On the other hand, if we consider $\tsxi{} = \dot{\psi}\, \d{t}$,
$j[\psi] = \frac{1}{2}\dot{\psi}^2$ (the energy) and $\rho[\psi] =
\dot{\psi}\ddot{\psi}\,\d{t}$, the corresponding deformation current is
trivial:
\begin{equation}
	j_{\rho,\psi^2}[\psi] = \dot{\psi} \psi^2\,\d{t}
	= \frac{1}{3} \d\psi^3 .
\end{equation}
In other words, the force term $f[\phi] = \phi^2$ is energy-conservative.

\subsubsection{Semilinear elliptic PDEs}
A very similar situation occurs on higher dimensional compact
manifolds, say $M=S^2$, with Laplace-type semilinear equations, $F =
M\times \mathbb{R}$, $G= \tilde{F}^* = \Lambda^2M$,
\begin{equation}
	{*}(\Delta \phi + l(l+1)\phi) = f[\phi] ,
\end{equation}
where we interpret $S^2$ as the standard, round unit sphere, $*$ is the
Hodge star, $\Delta = {*}\d{*}\d + \d{*}\d{*}$ is the Laplacian on it
and $l\ge 0$ is an integer. The linearized equation about $\varphi=0$ is
${*}(\Delta \psi + l(l+1)\psi)=0$ and its solution space is well known
to be of dimension $2l+1$ and spanned by spherical harmonics,
$\S(\E_\varphi)\cong \mathbb{R}^{2l+1}$. A non-trivial cosymmetry is
given by a spherical harmonic $\tsxi{} = Y^l_m$. The corresponding
conserved $1$-current is $k_m[\phi] = {*}(Y^l_m\d\phi-\phi\d Y^l_m)$ and
corresponding null $0$-source is $\rho_m[\phi] =
{*}Y^l_m(\Delta\phi+l(l+1)\phi)$.

If we set $f_\varphi[\psi] = f[\psi] = \psi^2$, we get the non-trivial
deformation current
\begin{equation}
	j_{\rho_m,\psi^2} = {*}Y^l_m\psi^2 ,
\end{equation}
with integrated charge
\begin{equation}
	\langle [S^2], Q_m(\psi) \rangle
	= \int_{S^2} {*} Y^l_m \psi^2 .
\end{equation}
In other words, in order to satisfy all linearization obstructions of
the form $Q_m(\psi) = 0$, $-l\le m \le l$, the decomposition of $\psi^2$
into spherical harmonics must have vanishing coefficients. These
conditions are only satisfied by the zero solution; these
obstructions drop the dimension of the linearized solution space from
$2l+1$ to $0$, $\{\psi \mid \psi = 0\} \cong \mathbb{R}^0$.

\subsubsection{Other examples}
Other examples considered in the existing literature concern some
problems from Riemannian geometry on compact manifolds, such as the
equation describing metrics of constant scalar curvature~\cite{fima-riem}.
The identified obstructions can also be computed using our main result.

\subsection{Stage-$1$ irreducible systems}

\subsubsection{Einstein equations}
Einstein's equations describe the dynamics of the gravitational field, a
Lorentzian metric, on a manifold $M$ either in vacuum or, with
appropriate modification, in the presence of matter. A Lorentzian metric
$g\in \Secs(S^2T^*M)$ is a section of the bundle $F=S^2T^*M\to M$ of
rank-$2$ covariant symmetric tensors. Detailed formulas necessary to
exhibit the structure of the equations, the non-linearity and the
linearization obstructions are rather lengthy and, for our purposes, not
particularly illuminating. The linearization instabilities of Einstein
equations have been studied extensively, they in fact gave birth to this
subject, and all the relevant details are summarized in the introductory
sections of~\cite{struc1}. Below, we merely indicate how they fit into
the framework of our main result.

The linearized Noether complex for Einstein equations, as well as its
formal adjoint, about a background metric $\varphi=\bar{g}$, are given by
\begin{align}
	&\xymatrix{
		\Secs_F(S^2T^*M) \ar[rr]^-{e_\varphi=L} &&
		\Secs_F(S^2T^*M) \ar[r]^-{B} &
		\Secs_F(T^*M) \ar[r] &
		0 ,
	} \\
	&\xymatrix{
		\Secs_F(S^2T^*M) \ar@{<-}[rr]^-{e_\varphi^*=L} &&
		\Secs_F(S^2T^*M) \ar@{<-}[r]^-{K} &
		\Secs_F(T^*M) \ar@{<-}[r] &
		0 .
	}
\end{align}
Note that we have identified $\tilde{F}^* \cong F$, and the same for the
other tensor bundles, using $\bar{g}$ to raise and lower tensor indices
as well as to construct a canonical volume density. Here, $L$ is the
differential operator of the linearized Einstein equations, also known
as the \emph{Lichnerowicz operator}. The operators $B$ and $K$
correspond, respectively, to the linearized \emph{Bianchi identity} and
\emph{Killing equation}. In local coordinates, they are
\begin{align}
	(B[h])_{i} &= \nabla^j h_{ij} , \\
	(K[v])_{ij} &= \frac{1}{2}(\nabla_i v_j + \nabla_j v_i) ,
\end{align}
where $\nabla$ is the Levi-Civita connection with respect to $\bar{g}$,
which is also used to raise and lower tensor indices. Solutions to the
Killing equation are called \emph{Killing vectors}; in our framework
they are the rigid stage-$1$ cosymmetries of the linearized Einstein
equations. The conserved $2$-currents corresponding to these Killing
vectors are known as Abbott-Deser fluxes~\cite{abbott-deser} and the
corresponding null $1$-sources do not have a name in the literature. On
the other hand, the corresponding deformation $1$-currents, constructed
using the quadratic term in Einstein's equations, are known as the Taub
conserved currents~\cite{taub}. The connection between the presence of
non-trivial Killing vectors and the vanishing of their Taub charges as a
linearization obstruction was first noticed by
Moncrief~\cite{moncrief-killing2}.

\subsubsection{Yang-Mills equations}
The notational background for this section is given in
Appendix~\ref{sec:forms}. In Yang-Mills theory~\cite{wiki-ym}, the basic
dynamical field is a semi-simple Lie algebra $\g$-valued $1$-form
$\alpha\in \Omega^1(M,\g)$, so $F=\g\otimes \Lambda^1M$. Where $M$ is a
manifold of $\dim M = n$ and endowed with a (pseudo-)Riemannian metric.
It is a stage-$1$ irreducible, non-linear deformation of the linear
Maxwell theory. Its Lagrangian density with the leading order
non-linearity, as a local variational form, is
\begin{equation}
	\mathcal{L} = -\frac{1}{4}\langle \dh\alpha \wedge {*}\dh\alpha\rangle
		+ \frac{1}{2} \langle \dh\alpha \wedge {*}[\alpha\wedge \alpha] \rangle
		+ O(\alpha^4) .
\end{equation}
The linearized equations, about $\varphi = \alpha = 0$, and the leading
order non-linear consistent deformation are obtained by a vertical
variation of the Lagrangian density:
\begin{align}
	\dv\mathcal{L}
	&= -\frac{1}{2}\langle \dv\dh\alpha \wedge {*}\dh\alpha \rangle
		+ \frac{1}{2}\langle \dv\dh\alpha \wedge {*}[\alpha\wedge\alpha] \rangle
		+ \langle \dh\alpha \wedge {*}[\dv\alpha\wedge\alpha] \rangle \\
\notag & \qquad {}
		+ O(\alpha^4) \\
	&= -\frac{1}{2}\langle \dv\alpha \wedge {*}\Dh\dh\alpha \rangle
		+ \frac{1}{2}\langle \dv\alpha\wedge {*}\Dh[\alpha\wedge\alpha] \rangle
		+ \langle \dv\alpha \wedge [\alpha \wedge {*}\dh\alpha] \rangle \\
\notag & \qquad {}
		+ \dh({\cdots}) + O(\alpha^4) .
\end{align}
We can read off the direct and adjoint linearized Noether complexes as
\begin{align}
\label{eq:ym-lNcpx}
	&\xymatrix{
		\Omega^{1,0}(F,\g) \ar[rr]^-{e_\varphi={*}\Dh\dh} &&
		\Omega^{3,0}(F,\g) \ar[r]^-{\dh} &
		\Omega^{4,0}(F,\g) \ar[r] &
		0 ,
	} \\
\label{eq:ym-alNcpx}
	&\xymatrix{
		\Omega^{3,0}(F,\g) \ar@{<-}[rr]^-{e_\varphi^*={*}\Dh\dh} &&
		\Omega^{1,0}(F,\g) \ar@{<-}[r]^-{\dh} &
		\Omega^{0,0}(F,\g) \ar@{<-}[r] &
		0 ,
	}
\end{align}
while the leading order consistent deformation is
\begin{equation}
	f^{(2)}_\varphi[\alpha] = f[\alpha]
	= {*}\delta\frac{1}{2}[\alpha\wedge\alpha]
		+ [\alpha\wedge {*}\d\alpha] .
\end{equation}
A stage-$1$ rigid cosymmetry is any $\g$-valued $0$-form $\eps \in
\Omega^{0,0}(F,\g)$ such that $\d\eps = 0$. The corresponding conserved
$2$-current and null $1$-source are, respectively, $k = \langle \eps
\wedge {*}\d \alpha\rangle$ and $\rho = \langle \eps \wedge
{*}\delta\d\alpha\rangle$ (up to signs). The resulting deformation
current, which we can check is not off-shell closed, is
\begin{align}
	j_{\rho,f}[\alpha]
		&= \langle \eps \wedge [\alpha\wedge {*}\d\alpha] \rangle, \\
	\d j_{\rho,f}[\alpha]
		&= -\langle \eps \wedge [\alpha\wedge (\d{*}\d\alpha)] \rangle ,
\end{align}
where the neglected term is trivial because of the identity $\d{*}\delta
= 0$.

The above construction gives an obstruction $Q_{\eps}$ for each
rigid cosymmetry $\eps\in \Omega^{0,0}(F,\g)$ valued in
$H^{n-1}_\dR(M)$. These are precisely the obstructions that were
previously obtained by Moncrief~\cite{moncrief-ym}, where he checked
that they are non-trivial and also sufficient. Moncrief, like other
existing literature, implicitly used the existence of a compact Cauchy
surface as a witness to the non-triviality of $H^{n-1}_\dR(M)$.
Obstructions appear also at non-vanishing backgrounds $\varphi=A$, where
the rigid cosymmetries are geometrically identified with $A$-parallel
$\g$-valued scalars. If $A$ is interpreted as a connection on a
principal bundle, these parallel scalars are intimately connected with
its the holonomy group.

\subsubsection{Chern-Simons}
The notational background for this section is given in
Appendix~\ref{sec:forms}. In Chern-Simons theory~\cite{wiki-cs}, where
$\dim M = 3$, the basic dynamical field is also a semi-simple Lie
algebra $\g$-valued $1$-form $\alpha\in \Omega^1(M,\g)$, so $F=\g\otimes
\Lambda^1M$. It is a stage-$1$ irreducible, non-linear theory with
Lagrangian density, as a local variational form,
\begin{equation}
	\mathcal{L} = \frac{1}{2}\langle \alpha\wedge \dh\alpha \rangle
		- \frac{1}{3} \langle \alpha\wedge [\alpha \wedge \alpha] \rangle .
\end{equation}
The linearized equations, about $\varphi = \alpha = 0$, and the
non-linear consistent deformation are obtained by a vertical variation
of the Lagrangian density:
\begin{align}
	\dv\mathcal{L}
	&= \langle \dv\alpha \wedge \dh\alpha \rangle
		- \langle \dv\alpha \wedge [\alpha \wedge \alpha] \rangle
		+ \dh({\cdots}) .
\end{align}
We can read off the direct and adjoint linearized Noether complexes as
\begin{align}
\label{eq:cs-lNcpx}
	&\xymatrix{
		\Omega^{1,0}(F,\g) \ar[rr]^-{e_\varphi=\dh} &&
		\Omega^{2,0}(F,\g) \ar[r]^-{\dh} &
		\Omega^{3,0}(F,\g) \ar[r] &
		0 ,
	} \\
\label{eq:cs-alNcpx}
	&\xymatrix{
		\Omega^{2,0}(F,\g) \ar@{<-}[rr]^-{e_\varphi^*=\dh} &&
		\Omega^{1,0}(F,\g) \ar@{<-}[r]^-{-\dh} &
		\Omega^{0,0}(F,\g) \ar@{<-}[r] &
		0 ,
	}
\end{align}
while the leading order consistent deformation is
\begin{equation}
	f^{(2)}_\varphi[\alpha] = f[\alpha]
	= [\alpha \wedge \alpha] .
\end{equation}
Just as in the case of Yang-Mills theory, a rigid stage-$1$ cosymmetry
is any $\g$-valued $0$-form $\eps \in \Omega^{0,0}(F,\g)$ such that
$\d\eps = 0$. The corresponding conserved $2$-current and null
$1$-source are, respectively, $k = \langle \eps \wedge \alpha \rangle$
and $\rho = \langle \eps \wedge \d\alpha\rangle$. The resulting
deformation current, which we can check is not off-shell closed, is
\begin{align}
	j_{\rho,f}[\alpha]
		&= \langle \eps \wedge [\alpha\wedge\alpha] \rangle, \\
	\d j_{\rho,f}[\alpha]
		&= -\langle \eps \wedge [\alpha\wedge (\d\alpha)] \rangle .
\end{align}
The geometric interpretation of the rigid cosymmetries, and their
connection with the holonomy group, remains the same as in Yang-Mills
theory.

The solutions of the Chern-Simons equations constitute flat connections
on the corresponding principal bundle over $M$ (in the above simplified
setting, the principal bundle is trivial). The moduli space of flat
connections, that is, the solution space we have been considering modulo
the gauge transformations $\alpha \to \alpha + \d\omega$, has been
studied extensively. It is known that this moduli space is not a smooth
manifold, because it may have quadratic algebraic singularities at
connections with non-trivial holonomy groups~\cite{goldman}. That result
is entirely consistent with the obstructions computed above.

\subsection{Other examples}
Other irreducible gauge theories that have been considered in the
existing literature include Einstein-Yang-Mills~\cite{arms-eym} and classical
$N=1$ supergravity~\cite{bao-sugra}. The obstructions and (co-)Killing
conditions identified in these theories are reproduced by our framework
and have been checked to be non-trivial and sufficient. See
\cite{struc1,struc2} for a historical discussion and detailed
references.

\subsection{Stage $>1$ irreducible systems}

\subsubsection{Non-abelian Freedman-Townsend $2$-form}
The notational background for this section is given in
Appendix~\ref{sec:forms}. Generalizations of Maxwell equations to
$p$-forms (\emph{abelian} $p$-forms) provide a standard source of
examples of higher stage irreducible linear PDE systems. Their
consistent deformations (\emph{non-abelian} $p$-forms) provide examples
of corresponding higher stage irreducible non-linear PDE systems.

On a manifold $M$ of $\dim M = 4$ that is endowed with a
(pseudo-)Riemannian metric, the only consistent, non-linear deformation
of abelian $2$-form field theory is the Freedman-Townsend
model~\cite{freedman-townsend}. The basic dynamical field is a
semi-simple Lie algebra $\g$-valued $2$-form $\beta \in \Omega^2(M,g)$,
so $F = \g\otimes \Lambda^2 M$. Its Lagrangian density, with the leading
order non-linearity, as a local variational form, is
\begin{equation}
	\mathcal{L}[\beta] = -\frac{1}{8}\langle \dh\beta\wedge{*}\dh\beta \rangle
		- \frac{1}{4} \langle \beta\wedge [{*}\dh\beta\wedge {*}\dh\beta] \rangle
		+ O(\beta^4) .
\end{equation}
The linearized equations, about $\varphi = \beta = 0$, and the
non-linear consistent deformation are obtained by a vertical variation
of the Lagrangian density:
\begin{align}
	\dv\mathcal{L}[\beta]
	&= -\frac{1}{4} \langle \dv\dh\beta\wedge {*}\dh\beta \rangle
		- \frac{1}{4} \langle \dv\beta \wedge[{*}\dh\beta\wedge {*}\dh\beta]
				\rangle \\
\notag & \qquad {}
		+ \frac{1}{2} \langle \beta\wedge [{*}\dh\beta \wedge {*}\dv\dh\beta]
				\rangle + O(\beta^4) \\
	&= -\frac{1}{4} \langle \dv\beta \wedge {*}\Dh\dh\beta \rangle
		- \frac{1}{4} \langle \dv\beta \wedge[{*}\dh\beta\wedge {*}\dh\beta]
				\rangle \\
\notag & \qquad {}
		+ \frac{1}{2} \langle \dv\beta \wedge {*}\Dh[\beta\wedge{*}\dh\beta]\rangle
		+ O(\beta^4) .
\end{align}
We can read off the direct and adjoint linearized Noether complexes as
\begin{align}
	&\xymatrix{
		\Omega^{2,0} \ar[rr]^-{e_\varphi={*}\Dh\dh} &&
		\Omega^{2,0} \ar[r]^-{\dh} &
		\Omega^{3,0} \ar[r]^-{\dh} &
		\Omega^{4,0} \ar[r] &
		0 } , \\
	&\xymatrix{
		\Omega^{2,0} \ar@{<-}[rr]^{e_\varphi^*={*}\Dh\dh} &&
		\Omega^{2,0} \ar@{<-}[r]^{-\dh} &
		\Omega^{1,0} \ar@{<-}[r]^{\dh} &
		\Omega^{0,0} \ar@{<-}[r] &
		0 } ,
\end{align}
while the leading order consistent deformation is
\begin{equation}
	f^{(2)}_\varphi[\beta] = f[\beta]
	= [{*}\d\beta \wedge {*}\d\beta]
		- {*}\delta (2[{*}\d\beta \wedge \beta]) .
\end{equation}
A rigid stage-$2$ cosymmetry is any $\g$-valued $0$-form $\eps\in
\Omega^{0,0}(F,\g)$ such that $\d\eps = 0$. The corresponding conserved
$3$-current and null $2$-source are, respectively $k=\langle \eps\wedge
{*}\d\beta \rangle$ (a $1$-form) and $\rho = \langle \eps \wedge
{*}\delta\d\beta \rangle$ ($2$-form) (up to signs). The resulting
deformation current, which we can check is not off-shell closed, is
\begin{align}
	j_{\rho,f}[\beta]
		&= \langle \eps \wedge [{*}\d\beta \wedge {*}\d\beta] \rangle , \\
	\d j_{\rho,f}[\beta]
		&= \langle \eps \wedge [\d{*}\d\beta\wedge {*}\d\beta] \rangle ,
\end{align}
where the neglected term is trivial because of the identity $\d{*}\delta
= 0$.

Therefore, for each such rigid stage-$1$ cosymmetry $\eps$, we have
found a second order linearization obstruction for the background
solution $\varphi = \beta = 0$ of the Freedman-Townsend model on any
(pseudo-)Riemannian $4$-manifold with non-vanishing
$H^2_\dR(M)$:
\begin{equation}
	[\eps]\cdot Q_\rho \colon \S(\E_\varphi) \to H^2_\dR(M) .
\end{equation}
An example of a Lorentzian, globally hyperbolic spacetime is the
Schwarzschild black hole spacetime, whose topology is
$\mathbb{R}^2\times S^2$. This observation appears to be new. We have
not explicitly checked that these obstructions are non-trivial, but that
is likely to be the case, by analogy with the Yang-Mills and
Chern-Simons examples. It would be interesting to obtain an
interpretation for these rigid cosymmetries in terms of the higher
bundle interpretation of higher gauge theories~\cite{baez-schr}.

\subsubsection{Other examples}
Higher gauge theories have attracted a lot of attention recently in some
mathematical literature (for instance, see~\cite{baez-schr,schr-aksz,royt}). Many of these theories involve Lie
algebra $\g$-valued $p$-forms as dynamical fields and exhibit
non-linearities similar to those of Yang-Mills, Chern-Simons and
Freedman-Townsend theories. In addition, $N>1$ classical supergravity
theories also involve dynamical $p$-forms, non-linearly coupled to other
fields~\cite{dewit}. It appears that the question of linearization
stability or instability of particular background solutions in these
theories has yet to attract any attention.

\section{Discussion}\label{sec:discuss}
We have considered the question of linearization instability for
a large class of non-linear PDE systems, which includes relativistic
Lagrangian field theories, with both irreducible and reducible gauge
theories among them. This class consists of all (not necessarily
Lagrangian) PDE systems, for which a Noether complex can be defined.

The question of linearization stability has two aspects. One is to
identify obstructions, which prevent an arbitrary linearized solution
from being extended to a family of exact ones. The other, once
sufficiently many obstructions have been identified, is to show that no
further obstructions exist. We have concentrated only on the first
aspect.  Moreover, we have restricted our attention only those
obstructions valued in the domain manifold's de~Rham cohomology, with
the cohomology representatives being constructed locally out of the
linearized solutions. Though, within this class of obstructions, we
likely give an exhaustive classification. (It is a matter of connecting
the result of Theorem~\ref{thm:def-cur} with the consistent deformations
coming from the non-linearity at leading, as was done in
Theorem~\ref{thm:obstr}, or higher orders). Interestingly enough, it
captures essentially all linearization obstructions that have been
identified in the work following up the original articles Brill~\&
Deser~\cite{brill-deser} and Fischer~\& Marsden~\cite{fima-linstab}. In
many cases these known obstructions have also been shown to be
sufficient~\cite{struc1,struc2}, at least when concentrating on initial
data and ignoring blow-up singularities.

Our main result (Theorem~\ref{thm:obstr}) provides a streamlined,
unified way of identifying such obstructions. Also, both the results and
intermediate calculations are everywhere geometric and covariant with
respect to the PDE domain. This is an improvement over previous
calculations, which treated individually each PDE system of interest and
often required some non-geometric intermediate steps. 

The conditions identifying linearization unstable backgrounds in known
examples have all been, in a sense, generalizations of the Killing
condition (existence of non-trivial Killing vectors) in general
relativity. Our main result puts this observation into a more general
context. We term the most general form of this condition the
\emph{co-Killing condition} (with the prefix \emph{co-} implying that it
is defined by the Noether complex, rather than the complex of gauge
generators, which are identical in Lagrangian theories). This condition
corresponds to the presence of non-trivial rigid (higher stage)
cosymmetries (Section~\ref{sec:cosym}). In Lagrangian theories (by a
generalization of Noether's second theorem), these are dual to rigid
(higher stage) symmetries, of which Killing vectors are in fact an
example.

Our result also clarifies the role of the non-trivial topology of the
PDE domain. The known linearization obstruction have so far appeared
mostly on PDE domains (say of dimension $n$) that are compact ($H^n_\dR
\ne 0$) or with a compact initial data surface ($H^{n-1}_\dR \ne 0$). No
particularly clear statement has been made about non-compact domains.
Now, from Theorem~\ref{thm:obstr}, we know that any non-trivial de~Rham
cohomology class in $H^*_\dR$ can generate a linearization obstruction.
However, the lower degree cohomology classes come into play only in the
presence of rigid higher stage cosymmetries. In particular, since
reducible gauge theories have not been analyzed in the literature on
linearization instabilities (Einstein, Yang-Mills and related theories
are \emph{irreducible} gauge theories), this explains why only the
$H^n_\dR$ and $H^{n-1}_\dR$ cohomology classes have played a role.
Furthermore, our result shows that linearization obstructions can appear
even in the absence of non-trivial topology, as long as the de~Rham
complex develops non-trivial cohomology when restricted by some
asymptotic boundary conditions on open manifolds
(Section~\ref{sec:asymp}).

Further, the intermediate result of Theorem~\ref{thm:def-cur} may be of
interest in its own right. It shows that consistent deformations of
linear PDE systems relate on-shell conserved currents of different
degrees. Namely, a consistent deformation and a conserved
$(p+1)$-current bilinearly define a $p$-current, which we have called a
\emph{deformation current}. We have named \emph{conservative}
those consistent deformations that have trivial
deformation currents. It may be interesting to analyze the subset of
conservative deformations and its relation to the general problem of
adding non-linear interactions to linear field
theories~\cite{henneaux-con}. In particular,
it is interesting to investigate whether the rank (in either argument)
of the deformation current mapping is expressible is some known
invariant of the linear PDE system.

Finally, in the problem of deformation quantization of classical
theories, the differential geometry of the phase space plays a crucial
role. When the phase space is a smooth manifold, the Fedesov
construction essentially solves the problem~\cite{fedosov}.
In classical field theories, the phase space can be identified
with the space of solutions. Non-smooth, singular points (precisely
those solutions that are linearization unstable) then pose an obstacle
to quantization. The situation with deformation quantization in the
presence of such singularities is much less clear, though some work has
been done in that direction~\cite{pflaum1,pflaum2,gotay}. We hope that
the results of this paper could serve as tools in this program.

\paragraph{Acknowledgements.}
The author thanks Glenn Barnich for enlightening discussions regarding
characteristic cohomology, Artur Sergyeyev for pointing out the concept
of cosymmetry, Sergey Igonin for references~\cite{verbov1,verbov2} and
Ian Anderson for reference~\cite{aa-taub}. The author also acknowledges
support from the Netherlands Organisation for Scientific Research (NWO)
(Project No.\ 680.47.413).

\appendix 

\section{Dynamical Forms}\label{sec:forms}
Below, we collect useful formulas for explicit calculation with theories
where the dynamical fields are differential $k$-forms on the manifold
$M$ of $\dim M = n$.  We take the field bundle to be $F=\Lambda^k M$,
often the equation bundle and the Noether bundles are bundles of forms over
$M$ as well, say $G\cong \Lambda^l M$ and $Z^i\cong \Lambda^{k_i} M$.
The space of $F$-local section can then be identified with the space of
local horizontal forms of the same degree, $\Secs_F(F) \cong
\Omega^{k,0}(F)$, $\Secs_F(G) \cong \Omega^{l,0}(F)$ and $\Secs_F(Z^i)
\cong \Omega^{k_i,0}(F)$. Dual densities can be identified with forms of
complementary degree, $\tilde{F}^* \cong \Lambda^{n-k}$, where the
natural fiber-wise pairing is realized as
\begin{equation}
	\lambda \cdot \mu = \lambda \wedge \mu ,
\end{equation}
with $\lambda\in \Secs(F)$ and $\mu\in \Secs(\tilde{F}^*)$. The de~Rham
differential $\d$ extended to act on local sections is none other than
the horizontal differential $\dh$ from the variational bicomplex. Using
the algebra of differential forms, it is easy to find its formal
adjoint:
\begin{equation}
	\d\lambda\wedge \mu - \lambda \wedge (-)^{|\lambda|+1}\d\mu
	= \d(\lambda\wedge \mu) ,
\end{equation}
which means that the formal adjoint of $\dh\colon \Omega^{l,0}(F) \to
\Omega^{l+1,0}(F)$ is $\dh^* = (-)^{l+1}\dh \colon \Omega^{n-l-1,0}(F)
\to \Omega^{n-k}$.

When the manifold $M$ is endowed with a (pseudo-)Riemannian metric $g$,
we can define the corresponding Hodge dual operator ${*}\colon
\Omega^{l,0}(F) \to \Omega^{n-l,0}(F)$. Recall the following
identities~\cite{wiki-hodge}:
\begin{align}
	\lambda_1 \wedge {*}\lambda_2 &= \lambda_2 \wedge {*}\lambda_1 , \\
	{*}{*} \lambda &= \epsilon_{|\lambda|} \eta , \quad
		\text{with}~ \epsilon_k = (-)^{k(n-k)+(n-s)/2} ,
\end{align}
where $s$ is the signature of $g$. If $n=4$ and the metric has
Lorentzian signature $({-}{+}{+}{+})$, $s=2$, then $\epsilon_k =
(-)^{k+1}$. With Hodge duality available, it is also possible to
identify the dual densities of $l$-forms with $l$-forms themselves with
the Hodge fiber-wise pairing
\begin{equation}
	\lambda\cdot \mu = \lambda \wedge {*}\mu .
\end{equation}
The formal adjoint of $\dh$ with respect to the above pairing is the
horizontal de~Rham \emph{codifferential} denoted by $\Dh$ (and by
$\delta$ when not acting on local horizontal forms):
\begin{equation}
	\dh\lambda \wedge {*}\mu - \mu\wedge {*}\Dh\mu = \dh(\mu\wedge {*}\nu) .
\end{equation}
It satisfies the identity $\Dh\lambda = (-)^{|\lambda|}
\epsilon_{|\lambda|} \dh$, so that $\Dh^2 = 0$ and $\dh{*}\Dh =
\Dh{*}\dh = 0$.

For calculations that include vertical forms, it is convenient to extend
the Hodge dual to all local variational forms so that
\begin{equation}
	{*}(\nu\wedge \lambda) = \nu \wedge {*}\lambda ,
\end{equation}
where $\nu$ is any purely vertical form and $\lambda$ is any purely
horizontal one. This way, the Hodge dual commutes with the vertical
differential, $\dv{*} = {*}\dv$. Moreover, the Hodge dual pairing
satisfies the identity
\begin{align}
	(\omega\wedge\lambda)\wedge{*}(\pi\wedge\mu)
	&= (-)^{|\lambda||\pi|} \omega\wedge\pi\wedge (\lambda\wedge{*}\mu) \\
	&= (-)^{(|\lambda|+|\omega|)|\pi|} \pi\wedge\omega\wedge
		(\mu\wedge{*}\lambda) \\
	&= (-)^{|\lambda||\pi|+|\omega||\pi|+|\mu||\pi|}
		(\pi\wedge\mu)\wedge{*}(\omega\wedge\lambda) ,
\end{align}
whenever $\omega$ and $\pi$ are purely vertical, while $\lambda$ and
$\mu$ are purely horizontal.

Let $\g$ be a Lie algebra. When the dynamical form fields admit the
interpretation of being components of a (higher) connection on a
(higher) principal bundle defined by $\g$ over $M$, it becomes
convenient to use $\g$-valued forms as dynamical fields, that is, $F =
\g\otimes \Lambda^k M$. The Lie algebra naturally comes with the
following bilinear operations:
\begin{align}
	\text{(commutator)} \quad [-]\colon & \g\otimes \g \to \g , \\
	\text{(Killing form)} \quad \langle-\rangle\colon &
		\g\otimes\g \to \mathbb{R} ,
\end{align}
where $[-]$ is antisymmetric, while $\langle-\rangle$ is symmetric and
it is non-degenerate for semi-simple $\g$. They satisfy the following
compatibility identity
\begin{equation}
	\langle a \otimes [b \otimes c] \rangle
	= \langle [a \otimes b] \otimes c \rangle ,
\end{equation}
where either side defines a trilinear, totally antisymmetric functional.
We extend slightly the notation for $F$-local horizontal forms and write
$\Omega^{h,v}(F,\g)$ for the space of local $\g$-valued variational
$(h,v)$-forms; we also write $\Omega^k(M,\g) \cong \g\otimes
\Omega^k(M)$. The operations $\dv$, $\dh$, ${*}$, $[-]$ and
$\langle-\rangle$ extend to local $\g$-valued variational forms simply
by treating one or the other tensor factor trivially, while $\wedge$ is
extended by acting as $\otimes$ on the Lie algebra factor. Then it is
easy to check that the expressions
\begin{equation}
	\langle \lambda_1 \wedge {*}\lambda_2 \rangle
	\quad\text{and}\quad
	\langle \mu_1 \wedge [\mu_2 \wedge \mu_3] \rangle
\end{equation}
are totally symmetric in their arguments as long as $\mu_i$ are forms of
odd degrees.

\bibliographystyle{utphys}
\bibliography{paper-linstab}

\end{document}